\documentclass[11pt,letterpaper]{article}
\usepackage[margin=1in]{geometry}
\usepackage[hyphens]{url}
\usepackage{graphicx}
\usepackage{xcolor}
\usepackage{natbib}
\usepackage{caption}
\usepackage{microtype}
\usepackage[colorlinks=true,allcolors=green!50!black]{hyperref}
\hypersetup{
  pdftitle={A Constant Metric Distortion Protocol for Approval Voting Given Plurality Polls},
  pdfauthor={Fabian Frank and Jannik Peters}
}
\usepackage{doi}
\usepackage{amsmath}
\usepackage{amsfonts}
\usepackage{amssymb}
\usepackage{amsthm}
\usepackage{nicefrac}
\newtheorem{theorem}{Theorem}
\newtheorem{corollary}{Corollary}
\newtheorem{conjecture}{Conjecture}

\newtheorem{lemma}{Lemma}
\newtheorem{example}{Example}
\newtheorem{protocol}{Protocol}

\newtheorem{definition}{Definition}

\usepackage{tikz}
\usepackage{thm-restate}
\usepackage{mathtools}
\usetikzlibrary{calc, positioning, shapes.geometric}
\usepackage{pgfplots}
\pgfplotsset{compat=1.18} 

\DeclareMathOperator{\plu}{plu}

\DeclareMathOperator{\rank}{rank}
\DeclareMathOperator*{\argmax}{arg\,max}
\DeclareMathOperator*{\argmin}{arg\,min}
\DeclareMathOperator{\dist}{dist}
\DeclareMathOperator{\soc}{sc}

\usepackage{cleveref}
\crefname{protocol}{Protocol}{Protocols}
\crefname{corollary}{Corollary}{Corollaries}
\newcommand{\approved}[1]{\textcolor{red}{ #1}}

\renewcommand{\top}{\mathrm{top}}
\providecommand{\texorpdfstring}[2]{#1}

\allowdisplaybreaks

\title{A Constant Metric Distortion Protocol for Approval Voting Given Plurality Polls}
\author{Fabian Frank, Jannik Peters\footnote{E-Mail: \url{fabian_w.frank@tum.de}, \url{jannikpeters2512@gmail.com}}\\
Technische Universität München, Shanghai University of Finance and Economics}
\date{\today}

\begin{document}

\maketitle

\begin{abstract}
    Approval voting is a simple and well-regarded voting rule: voters submit approval ballots (subsets of the candidates) and the candidate receiving the most approvals wins. One major limitation of approval voting is that it is not clear which candidates voters should approve if they have an underlying strict order over the candidates. In this paper, we initiate the study of approval voting under a simple kind of information: plurality polls. That is, we assume that for each candidate we know the share of voters who rank this candidate as their top choice. Using these plurality polls, we suggest a simple protocol parameterized by a fraction $k \in (0,1)$: every voter should approve the smallest prefix of their preference list containing the first choices of at least a fraction $k$ of the voters. We evaluate this protocol via the framework of metric distortion and show that for the optimal choice of $k$, this rule achieves a metric distortion of $2 + \sqrt{5}  \simeq 4.236$. 
    The proof techniques we use for this statement also show that the Bucklin voting rule has a metric distortion of at most $5$. Finally, we evaluate the robustness of our protocol to noise and show that the upper bounds obtained for our protocol are tight. 
\end{abstract}

\section{Introduction}
Single-winner voting is one of the core topics of (computational) social choice and a key ingredient in collective decision-making. Voters submit preferences over candidates, which are then aggregated to select a single candidate. This selection process is usually handled via so-called \emph{voting rules}: functions that map elections to their winners. A big part of social choice theory is focused on evaluating the quality of these voting rules. A recently popularized framework for evaluating the \emph{utilitarian} performance of voting rules is \emph{metric distortion} \citep{ABE+18a, GHS20a}. 
This framework assumes that voters and candidates are both embedded in a common metric space and that the distance between a voter and a candidate  corresponds to the cost this voter incurs if this candidate is selected. It is further assumed that the preferences of voters are consistent with this metric space, i.e., if the distance from voter $i$ to candidate $x$ is smaller than the distance from $i$ to candidate $y$ then $i$ weakly prefers $x$ to $y$. The metric distortion of a candidate is then
the worst ratio, over all consistent metric spaces, between its social cost --- the
sum of the distances from all voters to it --- and the social cost of an optimal
candidate. 
A successful line of work has shown that there exist deterministic ordinal rules with worst-case metric distortion 3 \citep{GHS20a, KiKe22a}, and that no deterministic ordinal rule can do better \citep{ABE+18a}. In general, several of the voting rules that achieve a good metric distortion are desirable outside of the metric distortion context, for instance, the plurality veto rule \citep{KiKe22a, KiKe23a, Mou81a}, Copeland's rule \citep{Cope51a}, or Maximal Lotteries \citep{Krew65a, CRWW24a}. One frequently advocated rule that is largely missing from the literature on metric distortion though is \emph{approval voting} \citep{BrFi07a}. In approval voting, instead of submitting a ranking over candidates, voters submit a subset of candidates they approve of. 
While this leads to a voting rule that is simple to present and whose ballots are simple to aggregate, approval voting has one major downside that makes analyzing it difficult: there is no ground-truth approval ballot. Instead, any prefix of a voter's preferences could be a sincere ballot and the voter needs to decide which prefix to choose. This, despite the procedural simplicity of approval voting, adds a layer of difficulty to the voting rule. 
Simple strategies that ignore the preferences of other voters entirely (such as ``approve your top-$\ell$ candidates'') lead to unbounded metric distortion \citep{PiSk19a}. Hence, any approval voting protocol (with the goal of achieving a good metric distortion) must use at least some information about the preferences of other voters.
That approval voting inherently requires the voters to adopt strategies based on the preferences of other voters is well known in the economics literature. However, which information to use and how to adapt to it is unclear. One common strategy, suggested for instance by \citet{MyWe93a} or \citet{Lasl09a} (see also \citet[Chapter~7]{BrFi07a}), is that voters should try to identify the two \emph{front-runners} of the election and then approve their smallest prefix including one of the front-runners. While this can have favorable properties in equilibrium, such a model poses another problem: voters need to be able to accurately identify front-runners, which again requires voters to make assumptions about the preferences of other voters and their strategic behavior. 

In this work, we suggest using a very simple (and relatively readily available) type of information: \emph{plurality polls}. That is, we assume that before conducting the election, voters learn the distribution of top-choice preferences (i.e., how many voters rank which candidate first), for instance through a publicly conducted poll.
Using this information, we suggest using the following simple strategy parameterized by $k \in (0,1)$: each voter approves the smallest prefix of their ordinal preference such that at least $k \cdot n$\footnote{Here, $n$ is the number of voters.}  voters have their top-choice candidate in this set. We call this protocol the \emph{$k$-plurality approval} protocol. After all voters have submitted their approval ballots, the winners according to the protocol are all candidates with the most approvals. 
\begin{figure}[t]
    \centering
    \begin{tikzpicture}[scale=0.75, transform shape]
\begin{axis}[
    xlabel={$k$},
    ylabel={Distortion},
    xmin=0, xmax=1,
    ymin=0, ymax=10,
    domain=0.01:0.99, 
    samples=200,
    axis lines=left,
    legend style={
    at={(0.65,0.6)},
    anchor=north west,
    },
    grid=both,
    grid style={dashed, gray!30},
    restrict y to domain=0:12, 
    width=10cm,
    height=8cm
]

\addplot[blue, thick, domain=0.05:1] {(2-x)/x};
\addlegendentry{$f_1(k) = \frac{2-k}{k}$}

\addplot[red, thick, domain=0:0.95] {(3-x)/(1-x)};
\addlegendentry{$f_2(k) = \frac{3-k}{1-k}$}

\addplot[black, ultra thick, dashed] {max((2-x)/x, (3-x)/(1-x))};
\addlegendentry{$\max(f_1,f_2)$}

\filldraw[black] (axis cs:0.381966, 4.236068) circle (2pt);
\node[anchor=south] at (axis cs:0.382, 2.3) {$(0.382, 4.236)$};

\end{axis}
\end{tikzpicture}

\caption{Worst-case metric distortion of the $k$-plurality approval protocol.} 
\label{fig:distbound}

\end{figure}
\subsection{Our Contribution}
We show that (provided that all voters truthfully follow the protocol and that the plurality scores are correctly elicited) the approval winner according to the $k$-plurality approval protocol has a metric distortion of at most $\max(\nicefrac{(2-k)}{k}, \nicefrac{(3-k)}{(1-k)})$ (see \Cref{fig:distbound} for a visualization of this bound) and hence is among the small group of voting rules that have a constant-factor metric distortion. In particular, this protocol has a metric distortion of $2 + \sqrt{5} \simeq 4.236$ for $k = \frac{3 - \sqrt{5}}{2} \simeq 0.382$ (thus equaling the distortion of the voting rule of \citet{MuWa19a}) and a metric distortion of $5$ for the majority threshold $k = \frac{1}{2}$.
We further show that these bounds are tight and that no anonymous approval voting based rule using plurality polls can have a better distortion than $2 + \sqrt{5}$. We then show that our analysis additionally extends to the fallback bargaining family of voting rules. This allows us to improve previous bounds of \citet{AFP22b} from $11$ to $5$ as well as to show that the first ``phase'' of the voting rule of \citet{KKK24a} (which is equivalent to Bucklin's voting rule)  already achieves a metric distortion of at most $5$.

Further, our analysis leads to one of the arguably simplest voting rules with a constant metric distortion so far: select a candidate minimizing the rank $r$ such that a weak majority of voters rank the candidate among their top $r$ alternatives. This voting rule is also known as the majoritarian compromise.   
Finally, turning back to $k$-plurality approval we study the impact of noisy plurality polls. We show that if run on a plurality poll that is close to the original plurality scores, then the metric distortion is also guaranteed to be close to the guarantees of the original protocol. Using this result we additionally obtain that it is sufficient to sample $\mathcal{O}(\log(nm/\delta)/\varepsilon^2)$ voters uniformly at random to achieve a metric distortion of $\max\left(\frac{2-k}{k}, \frac{3 - k - \varepsilon}{1 - k - \varepsilon}\right)$ with probability $1-\delta$. 

\subsection{Related Work} 
The general framework of distortion was introduced by \citet{ProRos06a} to measure the quantitative loss in social welfare by a voting rule that only has access to ordinal preferences. Our work in particular is situated in the framework of metric distortion introduced by \citet{ABE+18a}. \citeauthor{ABE+18a} showed that no deterministic voting rule can achieve a metric distortion of better than $3$, while Copeland's rule achieves a distortion of $5$. After intermediate improvements to $2+\sqrt{5}$
\citep{MuWa19a,Kem20a}, \citet{GHS20a} indeed presented a voting rule with distortion $3$, see also \citet{KiKe22a} and \citet{KiKe23a}. For randomized voting rules, it is currently known that the best achievable distortion lies between $2.1126$ \citep{ChRa22a} and $2.753$ \citep{CRWW24a} (with two recent unpublished preprints claiming an improvement to $2.5$ \citep{Fran26a, Ye26a}).

We are particularly related to recent research on the metric distortion of restricted classes of voting rules. For instance, \citet{AFP22b} study ``coordination dynamics'' in which voters need to coordinate in a decentralized fashion to obtain a winning candidate with low metric distortion. The coordination dynamic they suggest is equivalent to the majoritarian compromise rule and as shown by \citeauthor{AFP22b} achieves a metric distortion of $11$. Our work improves upon theirs in two ways: firstly, our $k$-plurality approval protocol can itself be understood as a two-round coordination dynamic achieving a better metric distortion than theirs, while secondly our proof template also applies to the majoritarian compromise rule and improves their upper bound from $11$ to $5$.

Another example is \citet{EHM24a}, who studied the query complexity of achieving low metric distortion, with one of their open questions being whether or not one can achieve a constant metric distortion using a protocol that only asks each voter $\mathcal{O}(m)$ pairwise comparisons (which cannot depend on the answers of the other voters). As we show later, our approval voting protocol can also be implemented using just $\mathcal{O}(m)$ pairwise comparisons per voter (at the cost of using two rounds of interaction). We are also related to \citet{AFP22a} who study various ``limited information'' settings such as truncated preferences or sampled voters.

Finally, by virtue of analyzing the metric distortion of approval voting we are closely related to \citet{PiSk19a} who analyze the distortion of approval voting in a setting in which each voter has an approval threshold (that is the voter approves a candidate if and only if the distance to this candidate is at most this threshold). They show that there always exists an approval radius for which approval voting has a distortion of $\frac{11}{3}$. This approval radius, however, cannot be identified using the ordinal preferences alone. \citeauthor{PiSk19a} further determine the distortion of approval voting based on the fraction of voters approving the optimal candidate and study what they call acceptability-based distortion which in essence measures how close a voting rule is to approval voting itself. Approval voting was also studied in the context of metric distortion by \citet{AFJV24a} who showed that using threshold approval queries (and if one knows the distances between the candidates), one can find a voting rule with a metric distortion of better than $3$.

\section{Preliminaries and Notation}  
Let  $N = \{1, \dots, n\} = [n]$ be the set of \emph{voters} and $C = \{c_1, \dots, c_m\}$ be the set of \emph{candidates}. Every voter $i \in N$ has a strict linear order $\succ_i \subseteq C\times C$
over the candidates. We denote the \emph{preference profile} by $\succ = (\succ_1, \dots, \succ_n)$.
Furthermore, for a voter $i \in N$ we write $a \succsim_i c$ if and only if $a \succ_i c$ or $a = c$. 
We denote by $\top(i)$ the most preferred candidate of $i$, i.e., $\top(i)\succsim_i c$ for all $c\in C$. For $c\in C$ we denote by $\plu(c) = \lvert\{i \in N \colon \top(i) = c\}\rvert$ the plurality score of a candidate and  by $p$ the normalized plurality score namely  $p(c) = \frac{\plu(c)}{n}$ for every $c \in C$. For a given $c \in C$ and voter $i \in N$ we write $\rank(i,c) \coloneqq \lvert \{c' \in C\colon c' \succ_i c\}\rvert + 1$. Note that $\rank(i, \top(i)) = 1$. Finally, given a voter $i \in N$ and a set of candidates $C' \subseteq C$ we say that $C'$ is a \emph{prefix} of $i$'s preference if $a \succ_i b$ for every $a \in C'$ and $b \in C \setminus C'$. 

For a given voter $i \in N$, an approval ballot of $i$ is a subset $A_i \subseteq C$. An approval profile $A = (A_1, \dots, A_n)$ is a collection of approval ballots. We also write
$A_c = \{i \in N \colon c \in A_i\}$ for the set of voters approving $c$ and call $\lvert A_c \rvert$ the \emph{approval score} of $c$. Given an approval profile $A$ the set of \emph{approval winners} is $\argmax_{c \in C} \lvert A_c\rvert$.

We assume that all voters and candidates lie in a common pseudo-metric space $(N \cup C,d)$ where $d \colon (N \cup C) \times (N \cup C) \rightarrow \mathbb{R}_{\geq 0}$ is a pseudo-metric. This means that for all $x,y,z \in (N \cup C)$ (i) $d(x,x) = 0$, (ii) $d(x,y) = d(y,x)$ and (iii) $d(x,z) \leq d(x,y) + d(y,z)$.

We say that a pseudo-metric $d$ is consistent with a profile $\succ$ if  $b \succ_i c$ implies that $d(i,b) \leq d(i,c)$ for all $i \in N$ and $b,c \in C$. Further, we denote by $\mathcal{D}(\succ)$ the set of all pseudo-metrics that are consistent with $\succ$.

We define the social cost of a candidate $c \in C$ given a pseudo-metric $d$ as $\soc(c,d) = \sum_{i \in N} d(i,c)$. If $d$ is clear from the context we omit it. Next, we define the metric distortion $\dist$ as 
$$\dist(c, \succ)=\sup_{d \in \mathcal{D}(\succ)} \frac{\soc(c,d)}{\min_{e \in C}\soc(e,d)}$$
with the convention that $\frac{0}{0} = 1$ and $\frac{x}{0} = \infty$ for $x > 0$.

We define a social choice correspondence as a function that maps every preference profile to a non-empty subset of the candidates in the profile.

The metric distortion of a social choice correspondence $f$ then is defined as
\[
\dist(f) \coloneqq \sup_\succ \max_{c \in f(\succ)} \dist(c,\succ),
\]
where the supremum ranges over all preference profiles, for any number of voters and candidates.

\section{Distortion Bounds for \texorpdfstring{$k$}{k}-Plurality Approval}

Our main object of study (and conceptual contribution) is the \emph{$k$-plurality approval protocol}. The protocol itself is parameterized by a value $k \in (0,1)$. It proceeds in two rounds: first, we elicit the top-choice candidate from each voter. Then every voter approves the smallest prefix of their ordinal preference that includes the top choices of at least $k\cdot n$ voters. We then elect the approval winners according to the resulting approval profile.

\begin{protocol}[$k$-Plurality Approval] \label{protoGeneral}

Let $k \in (0,1)$. Then, in the \emph{$k$-plurality approval} protocol we:
\begin{enumerate}
    \item For each voter elicit their top-choice candidate and compute the plurality score for each candidate.
    \item For each voter $i$ elicit their approval ballot $A_i$ which is the smallest prefix of $i$'s ballot with 
\(
\sum_{c \in A_i} \plu(c) 
\geq k n. \footnote{Such a prefix exists since all candidates combined have a total plurality score of $n$.}
\) Compile this into an approval profile $A$.
\item The winners according to the protocol are all approval winners according to $A$.
\end{enumerate}
\end{protocol}
\begin{example}
    To illustrate \Cref{protoGeneral} consider the preference profile depicted in \Cref{Example:Protocol}. On the left is the profile itself while on the right is the cumulative plurality scores of the prefix of each ballot. Let $k = \frac{1}{2}$. Then, the threshold is $5 \cdot \frac{1}{2} = 2.5$ and thus  each voter would approve the smallest prefix containing the first choices of at least $3$ voters. For our instance, this leads to the following approval ballots: 
        $A_1= \{a,c\}$, $A_2= \{a,d\}$, $A_3= \{b,a\}$, $A_4= \{c,b,e,d\}$, $A_5 = \{d,b,a\}$ (also marked in red in the profile). Thus, in this instance $a$ receives $4$ approvals and is the unique approval winner.
\end{example}

\begin{table}
    \centering
        \begin{tabular}[t]{ccccc}
            $1$  & $2$ & $3$ & $4$ & $5$\\
            \hline
            \approved{$a$} & \approved{$a$} & \approved{$b$} & \approved{$c$} & \approved{$d$}  \\
            \approved{$c$} & \approved{$d$} & \approved{$a$} & \approved{$b$} & \approved{$b$}\\
            $b$ & $b$ & $c$ & \approved{$e$} & \approved{$a$}\\
            $d$  & $c$ & $e$ & \approved{$d$} & $e$ \\
            $e$  & $e$ & $d$  & $a$ & $c$\\
        \end{tabular} \quad \quad \quad
        \begin{tabular}[t]{ccccc}
            $1$  & $2$ & $3$ & $4$ & $5$\\
            \hline
            \approved{$2$} & \approved{$2$} & \approved{$1$} & \approved{$1$} & \approved{$1$}  \\
            \approved{$3$} & \approved{$3$} & \approved{$3$} & \approved{$2$} & \approved{$2$} \\
            $4$ & $4$ & $4$ & \approved{$2$} & \approved{$4$}\\
            $5$  & $5$ & $4$ & \approved{$3$} & $4$ \\
            $5$  & $5$ & $5$  & $5$ & $5$\\
        \end{tabular}
        \caption{Example ballots for \Cref{protoGeneral}.
        The left-hand side depicts the preference profile while the right-hand side denotes the cumulative plurality score of the prefix. Candidates marked in red correspond to the approval ballot of the corresponding voter.}
        \label{Example:Protocol}
\end{table}

Before we turn to our distortion bounds, we show an interesting auxiliary result about $k$-plurality approval. One possible measure of the complexity of a voting rule is the number of \emph{pairwise comparisons} one would need to ask each voter in order to be able to compute the voting rule \citep{EHM24a}. Many standard voting rules need to sort the entire preference list of each voter and thus require  $\Theta (m \log m)$ many comparisons. In comparison, we show that for \Cref{protoGeneral} it is sufficient to ask $\mathcal O(m)$ many pairwise comparisons per voter in order to compute its output.
\begin{restatable}{theorem}{pairwise}
    For every $k \in (0,1)$ it is sufficient to make $\mathcal O(m)$ many pairwise comparisons for every voter in order to determine the winners according to the $k$-plurality approval protocol. \label{thm:query_complexity}
\end{restatable}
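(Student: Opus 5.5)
The plan is to show that each voter's approval ballot $A_i$ can be found with $\mathcal O(m)$ comparisons, by reducing the problem to weighted selection (a weighted median), which is solvable in linear time.

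First, the top choice $\top(i)$ of voter $i$ is found with $m-1$ comparisons by a linear scan. Doing this for every voter gives the plurality scores $\plu(c)$ for all $c \in C$. Computing these scores needs no further comparisons.

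Now fix a voter $i$ and give each candidate $c$ the weight $w(c) = \plu(c)$. The ballot $A_i$ is the smallest prefix of $\succ_i$ with total weight at least $kn$. In other words, we must find the candidate $c^\ast$ in $i$'s order at which the cumulative weight first reaches $kn$; then $A_i = \{c : c \succsim_i c^\ast\}$. This is a weighted selection problem, and we solve it with the linear-time median-of-medians algorithm \`a la Blum--Floyd--Pratt--Rivest--Tarjan, adapted to weights. We keep a current candidate set $S$, starting with $S = C$, and a residual threshold $t$, starting with $t = kn$. In each round we proceed as follows.
\begin{enumerate}
\item Find a pivot $x$ that is an approximate median of $S$ with respect to $\succ_i$, using $\mathcal O(|S|)$ comparisons.
\item Partition $S$ into $S^+ = \{c \in S : c \succ_i x\}$ and $S^- = \{c \in S : x \succ_i c\}$, using $|S|-1$ comparisons.
\item Compare weights. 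If $w(S^+) \ge t$, recurse on $S^+$ with the same threshold $t$. Otherwise, if $w(S^+) + w(x) \ge t$, then $c^\ast = x$ and we stop. Otherwise, recurse on $S^-$ with the new threshold $t - w(S^+) - w(x)$.
\end{enumerate}
While doing this we record, for each candidate, which side of $c^\ast$ it lies on; the candidates discarded along the way carry this information. Since the pivot is an approximate median, each round shrinks $|S|$ by a constant factor. The total number of comparisons is therefore a geometric series and sums to $\mathcal O(m)$. Arithmetic on the weights uses no preference comparisons, so it does not count. At the end we know $A_i$ exactly as a set.

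Once every $A_i$ is known, the approval scores $|A_c|$ and the approval winners follow without any further queries to voters. This establishes the statement of \Cref{thm:query_complexity}.

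The main obstacle is making sure that the selection really is linear. A plain ``sort, then take a prefix'' approach costs $\Theta(m\log m)$ comparisons, and a randomized pivot only gives the bound in expectation. So I would invoke the deterministic median-of-medians pivot explicitly. I would then check that including the weights does not break the recursion: we always recurse on one side, whose size is at most a constant fraction of $|S|$, regardless of where the weights fall.
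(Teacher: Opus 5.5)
Your proposal is correct and follows essentially the same approach as the paper: find the top choice with $m-1$ comparisons, then locate the least-preferred approved candidate with a linear-time weighted selection over the voter's order, and read off the ballot. The only difference is that the paper uses Reiser's linear-time maximal-prefix algorithm as a black box, which needs positive weights and so forces a restriction to candidates with positive plurality score plus a two-case post-processing step, whereas your self-contained BFPRT-style weighted selection handles zero weights and the threshold directly.
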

\begin{proof}
    In order to determine the most preferred alternative and thus the vote for the plurality poll it is sufficient to go once over all alternatives. Thus, at most $m-1$ comparisons are necessary. 

    \citet{reiser1978linear} has shown that determining a maximal prefix whose weighted score is at most $r$ for any value $r$ can be done in $O(m)$. The algorithm they describe to solve this problem requires that all weights are positive. 

    To compute the approval ballots we want to determine the least preferred candidate that still gets approved.
    For this we run this algorithm  with parameter $r = \lceil kn \rceil$ on all candidates that have a positive plurality score. If the returned prefix $S$ then has a combined plurality score of $\lceil kn \rceil$ we determine the least preferred element in $O(m)$. Otherwise, we determine the most preferred candidate from all candidates with positive plurality score not in $S$ (again in $O(m)$).

We claim that the selected candidate by this method actually
 is the least preferred candidate $c$ that still gets approved. First, observe $c$ has to have a positive plurality score since otherwise it does not affect the plurality score of the prefix contradicting minimality.
Second, if $S$ already has a combined plurality score of precisely $\lceil kn \rceil$ since every candidate in $S$ has a positive plurality score we get that the least preferred candidate in $S$ is equal to $c$. Finally, if $\plu(S) < \lceil kn \rceil$ we know by the maximality of $S$ that any longer prefix $T \supset S$ only containing candidates with plurality score at least one satisfies $\plu(T) > \lceil kn \rceil$. Thus, $c$ corresponds to the most preferred candidate with positive plurality score outside $S$.

To now determine the approval ballots we can simply compare every other candidate with $c$.
Since $c$ is the least preferred approved one, every candidate preferred to $c$ has to be approved and every other candidate not.

In total this describes a procedure that requires $O(m)$ comparisons.
\end{proof} 

\subsection{Distortion Upper Bound}
For the remainder of this subsection, fix a preference profile $\succ$ and corresponding approval profile $A$ computed in \Cref{protoGeneral}, an arbitrary approval winner $a\in C$, a candidate $b\in C$, and a pseudo-metric $d\in\mathcal D(\succ)$.
We begin with a simple lower bound on the approval score of the winning candidate $a$.
\begin{lemma}
The approval winner $a$ has an approval score of at least $kn$.
    \label{lem:appr_lowerbound}
\end{lemma}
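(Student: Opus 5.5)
We will argue by double counting the incidences between voters and the approval ballots, weighted by plurality scores. The idea is that each approval ballot $A_i$, by construction in \Cref{protoGeneral}, contains the top choices of at least $kn$ voters. Consequently, a random candidate drawn with probability proportional to its plurality score will, on average, be approved by many voters. Since the approval winner $a$ has the maximum approval score, its score must be at least this average.

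The key steps, in order:

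\begin{enumerate}
\item Compute the weighted sum
\[
S \coloneqq \sum_{c \in C} p(c)\,\lvert A_c\rvert \;=\; \sum_{c\in C} p(c) \sum_{i \in N} \mathbf{1}[c \in A_i] \;=\; \sum_{i\in N} \sum_{c \in A_i} p(c).
\]
This is just an exchange of the order of summation.

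\item Apply the definition of $A_i$. For each voter $i$ we have $\sum_{c\in A_i}\plu(c)\ge kn$, that is, $\sum_{c\in A_i}p(c)\ge k$. Summing over all $i \in N$ gives $S\ge kn$.

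\item Use that the weights form a probability distribution. Since $p(c)\ge 0$ and $\sum_{c\in C}p(c)=1$ (every voter has exactly one top choice), $S$ is a convex combination of the approval scores $\lvert A_c\rvert$. Hence
\[
\max_{c}\lvert A_c\rvert \;\ge\; S \;\ge\; kn.
\]

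\item Conclude. Because $a$ is an approval winner, $\lvert A_a\rvert=\max_c\lvert A_c\rvert\ge kn$.
\end{enumerate}

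There is no real obstacle in this argument. The only points to state carefully are that the plurality weights sum to one and that the exchange of summation in the first step is valid. A simpler alternative, such as looking only at the approvals of a single top candidate, would not obviously give the bound, which is why we use the averaging argument.
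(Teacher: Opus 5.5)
Your proposal is correct and is essentially the paper's own argument. The paper double counts $\sum_{i \in N}\sum_{c \in A_i}\plu(c) = \sum_{c \in C}\lvert A_c\rvert\,\plu(c) \ge kn^2$ and bounds the middle sum by $\lvert A_a\rvert \cdot n$, which is your convex-combination step written with the unnormalized plurality scores.
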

\begin{proof}
    By the definition of the protocol, for every voter $i \in N$ the sum of the plurality scores of the candidates in $A_i$ is at least $kn$ and hence $\sum_{i \in N}\sum_{c \in A_i}\plu(c) \ge n \cdot (kn) = k n^2$. Therefore, 
    $kn^2 \le \sum_{i \in N}\sum_{c \in A_i}\plu(c) = \sum_{c \in C} \lvert A_c\rvert \plu(c) \le \sum_{c \in C} \lvert A_{a}\rvert \plu(c) = \lvert A_{a}\rvert \sum_{c \in C} \plu(c) = \lvert A_{a}\rvert \cdot n.$ Hence, $\lvert A_{a}\rvert \ge kn$.  
\end{proof}

In general, to upper bound the distortion of our protocol, our goal is to construct a fractional assignment between the voters in such a way that we can let some voters ``pay'' for the costs incurred by other voters they are assigned to. Our key idea for this assignment follows along the idea of \citet{MuWa19a} and their matching uncovered set (we will, however, have to match voters multiple times). Our goal is to bound the social cost of the approval winner $a$ by a factor of the social cost of any candidate $b$. 

 Given $a$ and $b$ and two voters $i, j \in N$ we say that $i$ \emph{dominates} $j$ if there exists a candidate $c \in C$ with $a \succsim_i c$ and $c \succsim_j b$ \citep[see also][]{MuWa19a}.  Using this domination relation, we get two simple upper bounds on the social costs incurred by voters $i$ and $j$ respectively.

\begin{lemma}\label{lemmaC}
    Let $i, j \in N$ such that $i$ dominates $j$.
    Then $d(i,a)\le d(i,b) + 2\,d(j,b)$.
\end{lemma}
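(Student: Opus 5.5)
We unfold the definition of domination: there is a candidate $c \in C$ with $a \succsim_i c$ and $c \succsim_j b$. The first relation is used to compare $d(i,a)$ with $d(i,c)$, and the second to compare $d(j,c)$ with $d(j,b)$. The remaining work is to connect $c$ back to $b$ through the voter $j$ by the triangle inequality.

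\textbf{Step 1.} Since $d \in \mathcal{D}(\succ)$ and $a \succsim_i c$, consistency gives $d(i,a) \le d(i,c)$. If $a = c$ this holds with equality, and otherwise $a \succ_i c$ and we apply the definition directly. In the same way, $c \succsim_j b$ gives $d(j,c) \le d(j,b)$.

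\textbf{Step 2.} We bound $d(i,c)$ by the triangle inequality along the path $i \to b \to j \to c$:
\[
d(i,c) \le d(i,b) + d(b,j) + d(j,c).
\]
By Step 1, $d(j,c) \le d(j,b)$, and by symmetry $d(b,j) = d(j,b)$, so
\[
d(i,c) \le d(i,b) + 2\,d(j,b).
\]

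\textbf{Step 3.} Chaining with $d(i,a) \le d(i,c)$ from Step 1 yields $d(i,a) \le d(i,b) + 2\,d(j,b)$, which is the claim.

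There is no real obstacle here. The only care needed is to handle the weak relations $\succsim$, which is done by the case distinction $a = c$ or $c = b$ (equality) versus strict preference (consistency). We also note that only the pseudo-metric axioms are used, so $d(x,y) = 0$ for $x \neq y$ is harmless.
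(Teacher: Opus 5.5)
Your proposal is correct and matches the paper's proof. The paper splits the triangle inequality into two steps, first $d(b,c)\le d(j,b)+d(j,c)\le 2\,d(j,b)$ and then $d(i,c)\le d(i,b)+d(b,c)$, while you take the single path $i\to b\to j\to c$; the argument is the same.
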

\begin{proof}
    Let $c$ be the candidate witnessing the domination relation. First, we know that $d(i,a) \le d(i,c)$ and $d(j,c)\le d(j,b)$. By the triangle inequality we get $d(b,c) \le d(j,b) + d(j,c)\le 2\,d(j,b)$
and thus
    $d(i,a)\le d(i,c)\le d(i,b)+d(b,c)\le d(i,b)+2\,d(j,b).$
\end{proof}
Similarly, we can upper bound the cost of voter $j$.
\begin{lemma} \label{lemmaE}
    Let $i, j \in N$ such that $i$ dominates $j$.
    Then $d(j,a)\le 2\,d(i,b) +  3\,d(j,b)$.
\end{lemma}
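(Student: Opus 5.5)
The plan is to bound $d(j,a)$ by a triangle inequality through $i$, and then control $d(i,a)$ using \Cref{lemmaC}. Concretely, I would write
\[
d(j,a) \le d(j,i) + d(i,a).
\]
Both terms on the right are then bounded separately.

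For the second term, \Cref{lemmaC} applies directly, since $i$ dominates $j$ by assumption. It gives $d(i,a)\le d(i,b) + 2\,d(j,b)$.

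For the first term, the only available link between $i$ and $j$ is the common candidate $b$. I would use $d(j,i)\le d(j,b)+d(b,i)$.

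Adding the two estimates yields
\[
d(j,a)\le d(j,b)+d(i,b)+d(i,b)+2\,d(j,b) = 2\,d(i,b)+3\,d(j,b),
\]
which is exactly the claimed inequality.

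There is no real obstacle here. The only choice to make is routing the path as $j\to b\to i\to a$ and bounding the final leg $i\to a$ by \Cref{lemmaC}, instead of trying to exploit the witness candidate $c$ directly from $j$'s side. Note that the consistency assumption is used only inside \Cref{lemmaC}; the step in this lemma itself needs nothing beyond the triangle inequality for the pseudo-metric $d$.
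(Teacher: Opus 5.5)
Your proof is correct and is essentially the paper's: both bound $d(j,a)$ along the path $j \to b \to i \to a$, using only the triangle inequality plus \Cref{lemmaC} for the leg $i \to a$. The only difference is grouping: the paper first bounds $d(a,b)\le d(i,a)+d(i,b)\le 2\,d(i,b)+2\,d(j,b)$ and then uses $d(j,a)\le d(j,b)+d(a,b)$, whereas you go through $d(j,i)$.
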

\begin{proof}
    Using \Cref{lemmaC} and the triangle inequality we get
    $d(a,b) \le d(i,a) + d(i,b) \le \big(d(i,b) + 2\,d(j,b)\big) + d(i,b) = 2\,d(i,b) + 2\,d(j,b)$ and thus
    $d(j,a) \le d(j,b) + d(a,b) \le d(j,b) + \big(2\,d(i,b) + 2\,d(j,b)\big) = 2\,d(i,b) + 3\,d(j,b).$
\end{proof}

Using these upper bounds on the respective social costs, our target now becomes to strategically pair voters with other voters they dominate. To construct this assignment, we begin by defining a partition of $N$ based on whether the voters approve $a$, $b$, both, or neither.

\begin{definition}\label{def:voter_partitions}
    Given the approval winner $a$ and an arbitrary alternative $b$, we partition the voters $N$ into four disjoint subsets based on their approval set $A_i$:
    \begin{itemize}
        \item $N_{ab} = \{i \in N : a \in A_i \land b \in A_i\}$
        \item $N_{a} = \{i \in N : a \in A_i \land b \notin A_i\}$
        \item $N_{b} = \{i \in N : b \in A_i \land a \notin A_i\}$
        \item $N_{\emptyset} = \{i \in N : a \notin A_i \land b \notin A_i\}$
    \end{itemize}
\end{definition}

Note that $N_{ab}, N_a, N_b$, and $N_\emptyset$ form a partition of $N$ (with potentially empty sets). We observe two simple facts about our partition sets. Firstly, by \Cref{lem:appr_lowerbound} it must hold that $\lvert N_{ab} \cup N_a\rvert \ge \lceil kn\rceil$. Secondly, since $a$ is an approval winner it has to hold that  $\lvert N_a\rvert \geq \lvert N_b \rvert$. 

Using these four sets, we begin to reconstruct our domination relation. First, we observe that every voter approving $a$ dominates every voter in $N_\emptyset$.
 
\begin{lemma}
    Every $i \in N_{ab}\cup N_a$ dominates every $j \in N_\emptyset$. \label{lemmaD}
\end{lemma}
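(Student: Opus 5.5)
We need a witness $c$ with $a \succsim_i c$ and $c \succsim_j b$. The natural choice is the least preferred candidate in $j$'s approval set, $c \coloneqq$ the last element of $A_j$ with respect to $\succ_j$. The plan is to show that $c \succ_j b$ and $a \succsim_i c$. The first relation is immediate: $A_j$ is a prefix of $j$'s preference, $c \in A_j$, and $b \notin A_j$ because $j \in N_\emptyset$, so $c \succ_j b$.

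For the second relation we argue by contradiction. Suppose $c \succ_i a$. Since $a \in A_i$ and $A_i$ is a prefix of $i$'s preference, $c \in A_i$, and the set $P_i \coloneqq \{e : e \succsim_i c\}$ is a prefix strictly contained in $A_i$, because it omits $a$. By minimality of $A_i$ we then get $\sum_{e \in P_i} \plu(e) < kn$.

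We compare this with $j$. Let $A_j^- \coloneqq A_j \setminus \{c\}$, which is also a prefix of $j$. By minimality of $A_j$, $\sum_{e \in A_j^-}\plu(e) < kn$, while $\sum_{e\in A_j}\plu(e)\ge kn$. This alone does not give a contradiction, since the two voters' prefixes up to $c$ need not share candidates.

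The better route is to swap roles and use the threshold property of $a$ directly on $j$. Because $a \notin A_j$, we have $A_j \subseteq \{e : e \succ_j a\}$, but that is again a statement about $j$'s order only. So instead we pick the witness from $i$'s side: let $c$ be the least preferred candidate of $A_i$, so $a \succsim_i c$ automatically. We then need $c \succsim_j b$. It suffices to show $c \in A_j$, since then $c \succ_j b$ because $b \notin A_j$ and $A_j$ is a prefix. This step is the crux, and it is false as stated: $c$ need not be in $A_j$.

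The argument that works is a counting one. Let $X \coloneqq A_j$, a prefix for $j$ with $\plu(X) \ge kn$. The set $Y \coloneqq A_i \setminus \{c\}$ satisfies $\plu(Y) < kn$. If $A_i \cap A_j \neq \emptyset$, any $c$ in the intersection is a witness: $a \succsim_i c$ holds because $c \in A_i$, $a \in A_i$ and $A_i$ is a prefix ending at least at $a$. To be precise, every $c \in A_i$ satisfies $c \succsim_i$ (the last element of $A_i$), but not necessarily $c \succsim_i a$. So we need the stronger choice $c \in \{e : e \succsim_i a\} \cap A_j$.

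Define $P \coloneqq \{e : e \succsim_i a\}$ and $Q \coloneqq \{e : e \succsim_j b\}$. Both contain $A_i$ and $A_j$ respectively, because $a \in A_i$ and $b \notin A_j$ while both sets are prefixes. So $\plu(P) \ge kn$ and $\plu(Q)\ge kn$, and we need $P\cap Q\neq\emptyset$. That requires $\plu(P)+\plu(Q) > n$, which holds only when $k > 1/2$.

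This is the main obstacle, and the remaining issue is the regime $k \le 1/2$. There the top choices offer an alternative witness: if $\top(j)\in P$ we take $c=\top(j)$, and if $\top(i) \in Q$ we take $c=\top(i)$, since $a\succsim_i \top(i)$. Otherwise I would attempt a contradiction via the domination definition, though I suspect the lemma may in fact rely on more structure than this plan captures.
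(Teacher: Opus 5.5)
Your proposal does not reach a proof, and the obstacle you isolate for $k \le 1/2$ comes from looking on the wrong side of $a$ in voter $i$'s ranking. A witness must satisfy $a \succsim_i c$, i.e., it must lie in the \emph{lower} contour set $L_i = \{x \in C : a \succsim_i x\}$. Your set $P = \{e : e \succsim_i a\}$ is the \emph{upper} contour set, so a candidate in $P \cap Q$ is not a witness in general. The same reversal appears in your fallback, where you assert $a \succsim_i \top(i)$; this holds only when $\top(i) = a$. Moreover, the containment $P \supseteq A_i$ is backwards. Since $a \in A_i$ and $A_i$ is a prefix, in fact $P \subseteq A_i$, so the claim that $P$ has plurality mass at least $kn$ fails whenever $a$ is not the last element of $A_i$. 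Your spurious condition $k > 1/2$ comes from pairing two sets that each have mass at least $kn$.

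The missing step is one you already wrote down in your first contradiction attempt, but it needs to be applied to the right set. Since $a \in A_i$ and $A_i$ is the smallest prefix reaching the threshold, the candidates strictly above $a$ in $\succ_i$ form a proper sub-prefix of $A_i$. Hence their total plurality score is $< kn$, and therefore $\sum_{x \in L_i}\plu(x) > (1-k)n$. Together with $\sum_{x \in A_j}\plu(x) \ge kn$, the combined mass exceeds $n = \sum_{c \in C}\plu(c)$, so $L_i \cap A_j \neq \emptyset$ for every $k \in (0,1)$. Any $c$ in this intersection satisfies $a \succsim_i c$. It also satisfies $c \succ_j b$, because $A_j$ is a prefix of $j$'s ranking that does not contain $b$. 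This is exactly the paper's proof. Your first concrete choice, the last element of $A_j$, need not lie in $L_i$; that is why a counting argument is needed rather than a fixed candidate.
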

\begin{proof}
    Let $L_i = \{x \in C\colon a \succsim_i x\}$ be the set of candidates that voter $i$ weakly ranks below $a$. Because $i$ approves $a$, the plurality threshold is not met prior to candidate $a$. From this we get $\sum_{x \in C \setminus L_i} \plu(x) < kn$ and therefore $\sum_{x \in L_i} \plu(x)  > (1-k)n$.
    Further, let $A_j$ be the set of candidates approved by voter $j$. By the definition of the protocol, $\sum_{x \in A_j} \plu(x) \ge kn$. Thus, since the sum over all plurality scores is $n$ there must exist a candidate $c \in L_i \cap A_j$ via which $i$ dominates $j$. As $a\succsim_i c$ and $c \succ_j b$ (since voter $j$ does not approve $b$) we get that $i$ dominates $j$.
\end{proof}
 Further, we can show that each voter in $N_{ab}$ must dominate at least $(1-k)n$ voters. 
\begin{lemma}
    For every voter $i \in N_{ab}$ there exist at least $(1-k)n $ voters dominated by $i$. \label{lemmaOutdegree}
\end{lemma}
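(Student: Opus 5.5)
Fix $i \in N_{ab}$. The plan mirrors the proof of \Cref{lemmaD}: I will exhibit a large set of candidates ranked by $i$ weakly below $a$, and argue that every voter whose top choice lies in this set is dominated by $i$. Define $L_i = \{x \in C \colon a \succsim_i x\}$, the candidates that $i$ ranks weakly below $a$.

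First, I will bound the plurality mass of $L_i$. Since $i$ approves $a$ and $A_i$ is the \emph{smallest} prefix meeting the threshold, the prefix consisting of the candidates strictly above $a$ in $\succ_i$ (namely $C \setminus L_i$) does not reach the threshold. Hence $\sum_{x \in C\setminus L_i}\plu(x) < kn$, and therefore $\sum_{x\in L_i}\plu(x) > (1-k)n$. This is exactly the computation already carried out in \Cref{lemmaD}.

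Second, I will show that $i$ dominates every voter $j$ with $\top(j) \in L_i$. Taking $c = \top(j)$ gives $a \succsim_i c$, because $c \in L_i$. It also gives $c \succsim_j b$, because $c$ is $j$'s top choice, so $c \succsim_j b$ holds for every candidate $b$. Thus $c$ witnesses the domination. The number of such voters $j$ is $\sum_{x\in L_i}\plu(x) > (1-k)n$, which yields at least $(1-k)n$ dominated voters.

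I do not expect a real obstacle here. The only point to check is that the definition of domination allows the witness $c$ to be $j$'s top choice, which it does since $\succsim$ includes equality. Notably, the argument never uses that $i$ approves $b$, only that $i$ approves $a$. So the same bound holds for all of $N_{ab}\cup N_a$, and the restriction to $N_{ab}$ is presumably just how the lemma is applied later. The dominated set may include $i$ itself, which is harmless since a voter can dominate itself.
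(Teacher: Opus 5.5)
Your proof is correct and follows the paper's argument essentially verbatim: it also takes $L_i=\{x\in C\colon a\succsim_i x\}$, uses the minimality of $A_i$ (as in \Cref{lemmaD}) to show $L_i$ carries plurality mass at least $(1-k)n$, and uses $c=\top(j)$ as the domination witness for every voter $j$ whose top choice lies in $L_i$. Your side remark is also right: only $a\in A_i$ is used, so the bound holds for all of $N_a\cup N_{ab}$, which is how the analogous fact is used in the fallback-bargaining proof.
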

\begin{proof}
    Again, we know that the candidates $L_i = \{x \in C  \colon a \succsim_i x\}$  have total plurality score at least $(1-k)n$. The plurality score counts the voters $j$ who top-rank one of these candidates, and for such a $j$ we have $a\succsim_i \top(j)$. There are therefore at least $(1-k)n$ such voters $j$ and with $c=\top(j)$ we get both $a\succsim_i c$ and $c\succsim_j b$. Hence, $i$ dominates at least $(1-k)n$ different voters.
\end{proof}

Finally, every voter in $N_b$ dominates every voter in $N_a$. 
\begin{lemma}
    Let $i \in N_b$ and $j \in N_a$ then $i$ dominates $j$. \label{lem:b_out}
\end{lemma}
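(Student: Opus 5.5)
The plan is to exhibit the witness candidate required by the domination relation directly. Recall that $i$ dominates $j$ exactly when there is some $c \in C$ with $a \succsim_i c$ and $c \succsim_j b$. I will take $c = a$. Both conditions then have to be checked against the structure of the approval ballots that \Cref{protoGeneral} produces.

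The first condition, $a \succsim_i a$, holds trivially by reflexivity of $\succsim_i$, since $a \succsim_i c$ is defined to include the case $a = c$. So all the work is in the second condition, $a \succsim_j b$.

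For the second condition I will use that every approval ballot $A_j$ produced by \Cref{protoGeneral} is, by construction, a prefix of $j$'s preference. That is, $x \succ_j y$ for every $x \in A_j$ and every $y \in C \setminus A_j$. Since $j \in N_a$, we have $a \in A_j$ and $b \notin A_j$. The prefix property then gives $a \succ_j b$, and in particular $a \succsim_j b$. Hence $c = a$ witnesses that $i$ dominates $j$.

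I do not expect a real obstacle. The only point needing care is to justify the strict comparison $a \succ_j b$ from the prefix property of $A_j$, rather than from any metric consideration. It is also worth noting which hypotheses are actually used. The membership $i \in N_b$ is not needed for this particular witness; it is the condition $j \in N_a$ that does the work. The alternative witness $c = b$ would fail, because $i \in N_b$ gives $b \succ_i a$. The lemma is stated for $i \in N_b$ because that is how it will be used in the later assignment argument, where voters of $N_b$ must pay for voters of $N_a$ via \Cref{lemmaC} and \Cref{lemmaE}.
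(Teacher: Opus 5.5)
Your proof is correct and follows the paper's argument: both use the witness $c = a$, with $a \succsim_i a$ holding trivially and $a \succsim_j b$ holding because $j$'s ballot is a prefix containing $a$ but not $b$. Your explicit appeal to the prefix property of $A_j$ spells out the step that the paper's one-line proof leaves implicit.
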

\begin{proof}
    This follows from $a \succsim_i a$ and $a \succsim_j b$. 
\end{proof}

 Now, we have all the ingredients to construct our fractional assignment. We will phrase it here in a more general form that will later allow us to apply the same proof to other voting rules. Formally, for us a fractional assignment is a function $w \colon (N_b \cup N_{ab} \cup N_\emptyset) \times N \to [0,1]$. The interpretation of this fractional assignment will be that $w(i,j)$ is (roughly) the weight by which $j$ compensates the social cost of $i$. We do not assign voters from $N_a$ (at least not on the left side of the assignment) as they are already guaranteed to prefer $a$ to $b$ by virtue of approving $a$ but not $b$. Now, for our fractional assignment, let $t \ge 0$ be a parameter. With this parameter we require five properties:
\begin{itemize}
    \item[i.)] $\sum_{j \in N} w(i,j) = 1$ for every $i \in N_b \cup N_{ab} \cup N_\emptyset$;
    \item[ii.)] $\sum_{i \in N_b \cup N_{ab} \cup N_\emptyset} w(i,j) \le t + 1$ for all $j \in N \setminus N_\emptyset$;
    \item[iii.)] $\sum_{i \in N_b \cup N_{ab} \cup N_\emptyset} w(i,j) \le t$ for all $j \in N_\emptyset$;
    \item[iv.)] If $w(i,j) > 0$ and $i \notin N_\emptyset$ then $i$ dominates $j$;
    \item[v.)] If $w(i,j) > 0$ and $i \in N_\emptyset$ then $j$ dominates $i$. 
\end{itemize}
For each $i \in N_b \cup N_{ab} \cup N_\emptyset$ we write $w(i)$ to be the set of voters $j \in N$ with $w(i,j) > 0$.
We can show that the existence of such an assignment is indeed enough to certify that $a$ has distortion at most $3 + 2t$.
\begin{restatable}{lemma}{lemupper}\label{thm:upperBound}
    Let $t\ge 0$. If a fractional assignment $w$ satisfying conditions i) -- v) exists for parameter $t$ and if every voter in $N_a$ prefers $a$ to $b$, then $\soc(a) \le (3 + 2t)\soc(b)$.
\end{restatable}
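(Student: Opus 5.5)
We split the social cost of $a$ along the partition of \Cref{def:voter_partitions} as $\soc(a) = \sum_{i \in N_a} d(i,a) + \sum_{i \in N_b \cup N_{ab} \cup N_\emptyset} d(i,a)$. Voters in $N_a$ prefer $a$ to $b$ by assumption, so consistency of $d$ gives $d(i,a) \le d(i,b)$ for them. For every other voter $i$ we use property i.), $\sum_j w(i,j)=1$, to write $d(i,a) = \sum_{j} w(i,j)\, d(i,a)$. We then bound each term with $w(i,j)>0$ by \Cref{lemmaC} or \Cref{lemmaE}, depending on which side of the domination the voter $i$ is on.

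If $i \notin N_\emptyset$ and $w(i,j)>0$, property iv.) says that $i$ dominates $j$. \Cref{lemmaC} then gives $d(i,a) \le d(i,b) + 2d(j,b)$. If $i \in N_\emptyset$ and $w(i,j)>0$, property v.) says that $j$ dominates $i$. \Cref{lemmaE}, applied with the roles swapped, gives $d(i,a) \le 2d(j,b) + 3d(i,b)$. Summing these bounds, we get
\[
\soc(a) \le \sum_{i \in N_a} d(i,b) + \sum_{i \in N_b\cup N_{ab}} d(i,b) + 3\sum_{i \in N_\emptyset} d(i,b) + 2\sum_{j \in N} d(j,b)\,\Big(\sum_{i} w(i,j)\Big).
\]
Here we used property i.) again to collapse $\sum_j w(i,j) d(i,b)$ to $d(i,b)$.

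Next we bound the coefficient of each $d(j,b)$ using the load constraints. For $j \notin N_\emptyset$, property ii.) bounds the load by $t+1$. The own term contributes coefficient $1$, so the total coefficient is at most $1 + 2(t+1) = 2t+3$. For $j \in N_\emptyset$, property iii.) bounds the load by $t$. The own term contributes coefficient $3$, so the total coefficient is at most $3 + 2t$. Every voter therefore carries coefficient at most $3+2t$, and we conclude $\soc(a) \le (3+2t)\soc(b)$.

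The only delicate step is the bookkeeping that separates the two sides of the assignment. Voters in $N_\emptyset$ are charged with coefficient $3$ on their own distance, which is exactly why their incoming load is capped at $t$ rather than $t+1$. Voters in $N_a$ appear only on the receiving side, with their own coefficient $1$. Once this is tracked carefully, the proof is a direct summation.
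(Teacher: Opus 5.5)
Your proof is correct and is essentially the paper's argument: you bound $N_a$ via consistency, then average \Cref{lemmaC} (for $i\notin N_\emptyset$) and \Cref{lemmaE} with roles swapped (for $i\in N_\emptyset$) over $w(i,\cdot)$ using i.). You then regroup by receiving voter and apply the load caps ii.) and iii.), so every $d(j,b)$ gets coefficient at most $3+2t$ (this last step tacitly uses $d(j,b)\ge 0$).
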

\begin{proof}
    We get $\soc(a) = \sum_{i \in N} d(i,a) = \sum_{i \in N_a} d(i,a) + \sum_{i \in N_\emptyset} d(i,a) + \sum_{i \notin N_\emptyset \cup N_a} d(i,a)$. 

    First, for $i \in N_\emptyset$ we know that every voter in $w(i)$ dominates $i$ and hence, by \Cref{lemmaE},
    \begin{align*}
        d(i,a)
        \le \sum_{j \in w(i)} w(i,j)
        \bigl(3d(i,b) + 2d(j,b)\bigr) = 3d(i,b) + \sum_{j \in w(i)} w(i,j)2d(j,b).
    \end{align*}

    Second, for $i \notin N_\emptyset \cup N_a$ we know that $i$ dominates every voter in $w(i)$ and hence, by \Cref{lemmaC},
    \begin{align*}
        d(i,a)
        &\le \sum_{j \in w(i)} w(i,j)
        \bigl(d(i,b) + 2d(j,b)\bigr)= d(i,b) + \sum_{j \in w(i)} w(i,j)2d(j,b).
    \end{align*}

    Piecing these two inequalities together we get 
    \begin{align*}
        \soc(a)
        &= \sum_{i \in N} d(i,a) \\
        &= \sum_{i \in N_a} d(i,a)
          + \sum_{i \in N_\emptyset} d(i,a) + \sum_{i \notin N_\emptyset \cup N_a} d(i,a) \\
        &\le \sum_{i \in N_a} d(i,b) + \sum_{i \in N_\emptyset} 
          \left(3d(i,b) + \sum_{j \in w(i)} w(i,j)2d(j,b)\right) + \sum_{i \notin N_\emptyset \cup N_a} \left( d(i,b)
          + \sum_{j \in w(i)} w(i,j)2d(j,b) \right) \\
        &\le \sum_{i \in N_a} \left(d(i,b) + (t+1)\cdot 2d(i,b) \right) + \sum_{i \in N_\emptyset} \left( 3d(i,b) + 2t d(i,b) \right) + \sum_{i \notin N_\emptyset \cup N_a}
        \left(d(i,b)
          + (t+1)\cdot 2d(i,b) \right) \\
        &= (3+2t)\soc(b).
    \end{align*}
    The second to last step followed since every $i \in N_\emptyset$ appears at most at a fraction of $t$ and every $i \notin N_\emptyset$ appears at most at a fraction of $t+1$ of the ``right side'' of $w$.
\end{proof}

Now all that remains is to construct the fractional assignment $w$. We will show that such a fractional assignment exists via a simple application of Hall's Theorem, assuming the domination and size constraints from earlier. We will again formulate the lemma slightly more general than needed to accommodate later proofs. 
\begin{restatable}{lemma}{fractional} \label{lem:fractional_hall}
    Let $N_a, N_b, N_{ab}, N_\emptyset$ be a partition of $N$, $k \in (0,1)$, and $\delta \in [0,1-k)$. If 
    \begin{itemize}
        \item $\lvert N_a \cup N_{ab}\rvert \ge kn $;
        \item $\lvert N_a\rvert \ge \lvert N_b\rvert$;
        \item every voter in $N_a\cup N_{ab}$ dominates every voter in $N_\emptyset$;
        \item every voter in $N_{ab}$ dominates at least $(1-k-\delta)n$ voters;
        \item and every voter in $N_b$ dominates every voter in $N_a$.
    \end{itemize}
    There exists a fractional assignment $w$ satisfying conditions i) -- v) with parameter
    \[
    t_\delta \coloneqq \max\left\{
        \frac{1-2k}{k},
        \frac{k+\delta}{1-k-\delta}
    \right\}.
    \]
\end{restatable}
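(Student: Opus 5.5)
The plan is to build $w$ as a fractional transportation (flow) problem and certify that it is feasible with a fractional version of Hall's theorem. Take a bipartite graph whose left side is $L = N_b \cup N_{ab} \cup N_\emptyset$, where each vertex has demand $1$, and whose right side is $N$. Put a capacity of $t_\delta+1$ on each $j \in N\setminus N_\emptyset$ and a capacity of $t_\delta$ on each $j\in N_\emptyset$. An edge $(i,j)$ is allowed exactly when conditions iv.)/v.) permit it. Concretely, $i\in N_b\cup N_{ab}$ is joined to the voters it dominates, and $i\in N_\emptyset$ is joined to the voters dominating it.

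By the hypotheses, this edge set contains three families. First, every $i\in N_b$ is joined to all of $N_a$. Second, every $i\in N_\emptyset$ is joined to all of $N_a\cup N_{ab}$. Third, every $i\in N_{ab}$ is joined to at least $(1-k-\delta)n$ voters. A fractional assignment of value $|L|$ satisfying i.)--v.) exists by max-flow/min-cut (equivalently, the fractional Hall condition). Hence it suffices to show
\[
\textstyle |S| \le \sum_{j\in\Gamma(S)} \mathrm{cap}(j) \quad\text{for every } S\subseteq L .
\]

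Write $x=|N_a|$, $y=|N_{ab}|$, $z=|N_b|$, $e=|N_\emptyset|$, so that $x+y+z+e=n$, $x+y\ge kn$ and $x\ge z$. I would verify Hall's condition by cases on which parts $S$ meets.

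If $S\subseteq N_b$, then $\Gamma(S)\supseteq N_a$, and the capacity is at least $(t_\delta+1)x\ge z\ge|S|$. If $S$ meets $N_\emptyset$ but not $N_{ab}$, then $\Gamma(S)\supseteq N_a\cup N_{ab}$, and we need $z+e\le (t_\delta+1)(x+y)$. Using $z+e=n-x-y$, this is $n\le (t_\delta+2)(x+y)$. That follows from $x+y\ge kn$ and $t_\delta\ge \frac{1-2k}{k}$, since $t_\delta+2\ge \frac1k$. This is exactly where the first term of $t_\delta$ enters.

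If $S$ meets $N_{ab}$, then $|\Gamma(S)|\ge (1-k-\delta)n$. Each neighbour has capacity at least $t_\delta$, so the capacity is at least $t_\delta|\Gamma(S)| + |\Gamma(S)\setminus N_\emptyset|$. I would compare this with $|S|\le n$ after subtracting the voters of $S$ that lie inside $\Gamma(S)$. The inequality $t_\delta(1-k-\delta)n\ge (k+\delta)n$, which is the second term of $t_\delta$, pays for the at most $(k+\delta)n$ voters outside $\Gamma(S)$. The remaining voters of $S$ that lie inside $\Gamma(S)$ should then be paid for by the additive "$+1$" on capacities in $\Gamma(S)\setminus N_\emptyset$.

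The main obstacle is this last mixed case, because the voters of $S\cap N_\emptyset$ that lie in $\Gamma(S)$ only get capacity $t_\delta$ and not $t_\delta+1$. When $S\cap N_\emptyset\neq\emptyset$, I would first try to exploit that $\Gamma(S)$ then also contains all of $N_a\cup N_{ab}$. The extra unit of capacity on those $x+y\ge kn$ voters, combined with the earlier bound $n\le(t_\delta+2)(x+y)$, should absorb the deficit. If the direct Hall count gets too tangled, the fallback is an explicit two-stage construction. In the first stage, route $N_b$ uniformly onto $N_a$, giving load at most $z/x\le 1$ per voter. In the second stage, route $N_{ab}$ uniformly over the voters each of them dominates, and $N_\emptyset$ over $N_a\cup N_{ab}$. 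The loads would then be balanced so that the load on any voter of $N_\emptyset$ never exceeds $t_\delta$.
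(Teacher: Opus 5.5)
\textbf{Gap in the mixed case.} Your overall route is the paper's: capacities $t_\delta+1$ off $N_\emptyset$ and $t_\delta$ on $N_\emptyset$, edges given by iv.)/v.), and a case check of the fractional Hall condition. Three of your cases are correct: $S\subseteq N_b$; $S$ meeting $N_\emptyset$ but not $N_{ab}$; and $S$ meeting $N_{ab}$ but not $N_\emptyset$. In the last one your bookkeeping, $t_\delta\lvert\Gamma(S)\rvert\ge(k+\delta)n\ge\lvert S\setminus\Gamma(S)\rvert$ with the ``$+1$'' on $\Gamma(S)\setminus N_\emptyset$ covering $S\cap\Gamma(S)$, is a valid alternative to the paper's $(t_\delta+1)(1-k-\delta)n-\lvert N_\emptyset\rvert\ge n-\lvert N_\emptyset\rvert$.

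The remaining case, where $S$ meets both $N_{ab}$ and $N_\emptyset$, is the one you leave at ``should absorb the deficit'', and it is a genuine gap. You never use that every voter of $N_{ab}$ dominates \emph{every} voter of $N_\emptyset$; your list of edge families records only the $(1-k-\delta)n$ bound for $N_{ab}$. This fact is indispensable, because it gives $\Gamma(S)\supseteq N_a\cup N_{ab}\cup N_\emptyset$. The two facts you do record, $\Gamma(S)\supseteq N_a\cup N_{ab}$ and $\lvert\Gamma(S)\rvert\ge(1-k-\delta)n$, are not enough. Take $\delta=0$ and $k=0.36$, so $t_\delta=\frac{7}{9}$. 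Let $N_a=N_b=\emptyset$ and $\lvert N_{ab}\rvert=kn$, and suppose $\Gamma(S)$ reaches only $(1-2k)n$ voters of $N_\emptyset$. Then $S=N_{ab}\cup N_\emptyset$ has capacity about $0.86n<n=\lvert S\rvert$.

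\textbf{What closes the case.} With $\Gamma(S)\supseteq N_a\cup N_{ab}\cup N_\emptyset$ and your notation, you need $t_\delta e+(t_\delta+1)(x+y)\ge z+y+e$. Using $x\ge z$, this reduces to $t_\delta(x+y+e)\ge e$. Since $x+y\ge kn$ and $e\le(1-k)n$ give $e/(x+y+e)\le 1-k$, it suffices to have $t_\delta\ge 1-k$. This needs \emph{both} terms of the maximum: $\frac{1-2k}{k}\ge 1-k$ iff $k^2-3k+1\ge 0$, while $\frac{k+\delta}{1-k-\delta}\ge\frac{k}{1-k}\ge 1-k$ when $k^2-3k+1\le 0$.

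The tool you propose, $n\le(t_\delta+2)(x+y)$, encodes only the first term and cannot close the case for $k>\frac{3-\sqrt5}{2}$. Consider $N_a=N_b=\emptyset$, $\lvert N_{ab}\rvert=kn$, $\lvert N_\emptyset\rvert=(1-k)n$, with $N_{ab}$ dominating exactly $N_\emptyset$, and $S=N$. All hypotheses hold, and the capacity $(t_\delta+k)n$ covers $\lvert S\rvert=n$ iff $t_\delta\ge 1-k$. Your bound yields only $t_\delta\ge\frac{1-2k}{k}$, which is below $1-k$ in that range.

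Your two-stage fallback does not fill the gap either. ``Balancing the loads so that no voter of $N_\emptyset$ exceeds $t_\delta$'' is precisely what has to be shown.
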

\begin{proof}
    To construct such an assignment it is sufficient to fractionally assign:
    \begin{itemize}
        \item voters in $N_{ab}$ to one of their at least $(1-k-\delta)n$ voters they dominate;
        \item voters in $N_b$ to voters in $N_a$;
        \item voters in $N_\emptyset$ to voters in $N_a \cup N_{ab}$.
    \end{itemize}
    Let $N' \subseteq N_{ab} \cup N_b \cup N_\emptyset$ be an arbitrary non-empty subset of voters. To apply Hall's Theorem to it remains to show that this subset collectively has $\lvert N'\rvert$ weighted voters they could be assigned to. We will do a case distinction based on the different sets the voters could belong to.

    First, if $N' \subseteq N_b$, we know that the voters in $N'$ could be fractionally assigned to $t_\delta+1 \ge 1$ copies of $N_a$. Since $\lvert N_a\rvert \ge \lvert N_b\rvert$ the Hall condition follows.

    Second, if $N' \subseteq N_b \cup N_\emptyset$ with at least one voter in $N'$ belonging to $N_\emptyset$, we know that the voters can be assigned to the $(t_\delta+1)$ copies of $N_a \cup N_{ab}$. Since $\lvert N_a \cup N_{ab}\rvert \ge kn$ we get that these are at least
    \begin{align*}
        (t_\delta+1)kn
        &= \left(\max\left(\frac{1-2k}{k},\frac{k + \delta}{1-k - \delta}\right)+1\right)kn \\
        &\ge\left(\frac{1-2k}{k}+1\right)kn = (1-k)n \ge \lvert N_b \cup N_\emptyset\rvert \ge \lvert N'\rvert
    \end{align*}
    many copies.

    Third, if $N' \subseteq N_b \cup N_{ab}$ with at least one voter in $N'$ belonging to $N_{ab}$, we know that the voter in $N_{ab}$ already dominates at least $(1-k - \delta)n$ many voters. Of these dominated voters, the voters in $N_\emptyset$ contribute $t_\delta$ while the voters outside $N_\emptyset$ contribute $(t_\delta+1)$. Hence, we get that these are at least \begin{align*}
        &t_\delta \lvert N_\emptyset\rvert
          + (t_\delta+1)\bigl((1-k - \delta)n-\lvert N_\emptyset\rvert\bigr) \\
        &\quad = (t_\delta+1)(1-k-\delta)n-\lvert N_\emptyset\rvert \\
        &\quad = \left(\max\left(\frac{1-2k}{k},\frac{k+\delta}{1-k - \delta}\right)+1\right)
          (1-k - \delta)n-\lvert N_\emptyset\rvert \\
        &\quad \ge n-\lvert N_\emptyset\rvert \ge \lvert N_{ab}\rvert+\lvert N_b\rvert \ge \lvert N'\rvert .
    \end{align*}
    
   Finally, assume that $N'\subseteq N$ with there existing voters from both $N_{ab}$ and $N_\emptyset$ in $N'$. Hence, $N'$ can be assigned to all of $N_\emptyset \cup N_a\cup N_{ab}$. For this case, we need one property of $t$, namely $\max\left(\frac{1-2k}{k}, \frac{k+\delta}{1-k-\delta}\right) = t \ge 1-k$. To see this, we observe that $\frac{1-2k}{k} \ge 1-k$ if and only if $k^2 - 3k+1 \ge 0$ and $\frac{k}{1-k} \ge 1-k$ if and only if $k^2 - 3k+1 \le 0$. Since $\frac{k+\delta}{1-k-\delta} \ge \frac{k}{1-k}$ (due to $f(x) = \frac{x}{1-x}$ increasing in $(0,1)$) the inequality holds in either of the two cases. Now,  by using that $\lvert N_\emptyset\rvert \le (1-k)n$ and $\lvert N_a\cup N_{ab}\rvert \ge kn$ we get a fractional weight of at least 
   \begin{align*}
       t &\lvert N_\emptyset\rvert + (t+1)(\lvert N_a\cup N_{ab}\rvert) \ge \lvert N_a\cup N_{ab}\rvert + t\lvert N_\emptyset \cup  N_a\cup N_{ab} \rvert \\ &= \lvert N_a\cup N_{ab}\rvert + \max\left(\frac{1-2k}{k}, \frac{k + \delta}{1-k - \delta}\right)\lvert N_\emptyset \cup  N_a\cup N_{ab} \rvert \\
       & \ge \lvert N_a\cup N_{ab}\rvert + (1-k) (\lvert N_\emptyset\rvert + kn)
       \\ & \ge \lvert N_a\cup N_{ab}\rvert + (1-k)(\lvert N_\emptyset\rvert + \frac{k}{(1-k)}\lvert N_\emptyset\rvert) \\ 
       & \ge \lvert N_a\cup N_{ab}\rvert + (1-k) \left(\frac{1}{(1-k)}\lvert N_\emptyset\rvert\right) \\ 
       &\ge \lvert N_a\cup N_{ab} \cup N_{\emptyset}\rvert \ge \lvert N_b\cup N_{ab} \cup N_{\emptyset}\rvert \ge \lvert N'\rvert.
   \end{align*}

   Hence, our desired fractional assignment exists.
\end{proof}
These two lemmas now allow us to construct our upper bound on the distortion of the $k$-plurality approval winners. 
\begin{theorem} \label{theoremBound}
    For every $k \in (0,1)$ every winner according to $k$-plurality approval has a distortion of at most  $\max\left(\frac{2-k}{k}, \frac{3-k}{1-k}\right)$.
\end{theorem}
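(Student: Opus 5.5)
We plan to derive the theorem by combining \Cref{thm:upperBound} and \Cref{lem:fractional_hall} with $\delta = 0$. Fix an approval winner $a$, an arbitrary candidate $b$ (which we will later take to be an optimal candidate), and a consistent pseudo-metric $d \in \mathcal{D}(\succ)$. We partition the voters into $N_{ab}, N_a, N_b, N_\emptyset$ as in \Cref{def:voter_partitions}, and then check the hypotheses of \Cref{lem:fractional_hall} one by one, using the earlier lemmas.

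The first hypothesis is $\lvert N_a \cup N_{ab}\rvert \ge kn$, which is exactly \Cref{lem:appr_lowerbound}. The second is $\lvert N_a\rvert \ge \lvert N_b\rvert$. It holds because $\lvert A_a\rvert = \lvert N_a\rvert + \lvert N_{ab}\rvert \ge \lvert A_b\rvert = \lvert N_b\rvert + \lvert N_{ab}\rvert$, since $a$ is an approval winner. The third hypothesis, that voters approving $a$ dominate all of $N_\emptyset$, is \Cref{lemmaD}. The fourth, that every voter in $N_{ab}$ dominates at least $(1-k)n$ voters, is \Cref{lemmaOutdegree} with $\delta = 0$. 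The last, that $N_b$ dominates $N_a$, is \Cref{lem:b_out}. The lemma therefore gives an assignment with parameter $t_0 = \max\{\frac{1-2k}{k}, \frac{k}{1-k}\}$.

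Next we check the extra hypothesis of \Cref{thm:upperBound}, namely that every voter in $N_a$ prefers $a$ to $b$, in the sense $d(i,a)\le d(i,b)$. A voter $i \in N_a$ approves $a$ but not $b$. Since $A_i$ is a prefix, this gives $a \succ_i b$, and consistency then yields $d(i,a) \le d(i,b)$. This is exactly the inequality used in the proof of \Cref{thm:upperBound}.

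Applying \Cref{thm:upperBound} gives $\soc(a) \le (3+2t_0)\soc(b)$. We compute $3 + 2\cdot\frac{1-2k}{k} = \frac{2-k}{k}$ and $3 + 2\cdot\frac{k}{1-k} = \frac{3-k}{1-k}$, so the factor is the claimed maximum. Taking $b$ to be a minimizer of the social cost makes the ratio $\soc(a)/\min_e \soc(e)$ bounded by this value. In the degenerate case $\soc(b)=0$, the inequality forces $\soc(a)=0$, and the convention $0/0=1$ applies. Taking the supremum over $d$ and over profiles completes the proof.

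The main obstacle lies in the two cited lemmas rather than in this assembly. With them taken as given, the only delicate points are the prefix argument for $N_a$ and the arithmetic identity for $3+2t_0$.
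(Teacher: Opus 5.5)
Your proposal is correct and follows the paper's proof: you instantiate \Cref{lem:fractional_hall} with $\delta = 0$ using the previously established size and domination facts, apply \Cref{thm:upperBound}, and simplify $3 + 2t_0$ to the stated maximum. You also spell out two details the paper leaves implicit: the prefix argument giving $d(i,a) \le d(i,b)$ for $i \in N_a$, and the degenerate case $\soc(b) = 0$.
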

\begin{proof}
    To show this, we will apply \Cref{lem:fractional_hall} with $\delta = 0$. We indeed observe that our previously constructed partition into $N_a, N_b, N_{ab}, N_\emptyset$ satisfies all conditions of \Cref{lem:fractional_hall}. Hence, such an assignment exists with $t = t_0 = \max\left\{
        \frac{1-2k}{k},
        \frac{k}{1-k}
    \right\}$ and therefore \Cref{thm:upperBound} shows that the approval winner $a$ has a distortion of at most $3 + 2t = 3+2\max\left\{
        \frac{1-2k}{k},
        \frac{k}{1-k}
    \right\} = \max\left(\frac{2-k}{k}, \frac{3-k}{1-k}\right)$.
\end{proof}
This upper bound is minimized for $k = \frac{3 - \sqrt{5}}{2}$ as the following corollary shows. 
\begin{corollary}
    For $k = \frac{3 - \sqrt{5}}{2}$ the distortion of $k$-plurality approval is at most $2 + \sqrt{5} \simeq 4.236$.
\end{corollary}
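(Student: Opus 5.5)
The plan is to minimize the upper bound from \Cref{theoremBound} over $k \in (0,1)$ and evaluate it at the minimizer. Set $f_1(k) = \frac{2-k}{k}$ and $f_2(k) = \frac{3-k}{1-k}$.

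First, I check monotonicity. Since $f_1(k) = \frac{2}{k} - 1$, it is strictly decreasing on $(0,1)$. Since $f_2(k) = 1 + \frac{2}{1-k}$, it is strictly increasing on $(0,1)$. Hence $\max(f_1,f_2)$ is minimized exactly where the two curves cross. Strictly speaking the corollary only needs the value at the stated $k$; the monotonicity just explains why this $k$ is the optimal choice.

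Second, I locate the crossing. Setting $f_1(k) = f_2(k)$ and clearing denominators gives $(2-k)(1-k) = k(3-k)$. This simplifies to $2k^2 - 6k + 2 = 0$, that is, $k^2 - 3k + 1 = 0$. This is the same quadratic that appeared in the proof of \Cref{lem:fractional_hall}. Its root in $(0,1)$ is $k = \frac{3-\sqrt{5}}{2} \approx 0.382$.

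Third, I evaluate the bound there. Using $k^2 = 3k - 1$, I get
\[
f_1(k) = \frac{2}{k} - 1 = \frac{4}{3-\sqrt{5}} - 1 = (3+\sqrt{5}) - 1 = 2 + \sqrt{5}.
\]
As a cross-check, $1 - k = \frac{\sqrt{5}-1}{2}$, so
\[
f_2(k) = 1 + \frac{4}{\sqrt{5}-1} = 1 + (\sqrt{5}+1) = 2 + \sqrt{5}.
\]
Plugging this into \Cref{theoremBound} yields distortion at most $2 + \sqrt{5} \approx 4.236$.

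There is no real obstacle in this argument. The only care needed is rationalizing the denominators correctly, which the cross-check of both branches confirms.
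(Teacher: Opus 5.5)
Your proposal is correct and matches the paper's argument: the paper likewise notes that at $k = \frac{3-\sqrt{5}}{2}$ both $\frac{2-k}{k}$ and $\frac{3-k}{1-k}$ equal $2+\sqrt{5}$, and then applies \Cref{theoremBound}. Your monotonicity remark explaining why this $k$ is optimal is not needed for the corollary but is harmless; the paper makes the same observation in the proof of \Cref{thm:proto_lower}.
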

\begin{proof}
    This follows since for $k = \frac{3 - \sqrt{5}}{2}$ we have $\frac{2-k}{k} = 2 + \sqrt{5} = \frac{3-k}{1-k}$. 
\end{proof}

\subsection{Fallback Bargaining Rules}
As a second application of our construction, we consider the class of $q$-approval fallback bargaining rules \citep{BrKi01a}. 
\begin{definition}
    Let $q \in [n]$. Then the \emph{$q$-approval fallback bargaining rule} selects all candidates in 
    \[
    \argmin_{c \in C} \min \{r \in [m] \colon  \lvert \{i \in N \colon \rank(i,c) \le r\}\rvert \ge q \}.
    \]
\end{definition}
In other words, $q$-approval fallback bargaining selects all candidates minimizing the rank $r$ needed after which at least $q$ voters have the candidate in their top-$r$ choices. As a short example, again consider the instance depicted in \Cref{Example:Protocol}. In this instance, for $q = \lceil \frac{n}{2}\rceil = 3$ both $a$ and $b$ are winners as they reach the required three voters for $r = 2$ which is the smallest possible.  

A particular special case for $q = \lceil\frac{n}{2}\rceil$ is the majoritarian compromise \citep{HuSe99a}, while for the strict majority threshold $q = \lfloor \frac{n}{2}\rfloor+1$ the rule is also known as the (simplified) Bucklin's rule.\footnote{It is also common to additionally restrict the voting rule to the set of candidates with the most voters ranking them within their top $r$. As this only refines the set of winners our later upper bound on the metric distortion automatically transfers.} In the context of metric distortion the majoritarian compromise was (essentially) rediscovered by \citet{AFP22b} who showed that it has a metric distortion of at most $11$ as well as by \citet{KKK24a} who designed a slightly more involved version of the majoritarian compromise\footnote{Instead of just selecting the majoritarian compromise winner, they additionally ran a head-to-head election with a second candidate representing the voters not part of the majority.} which they showed achieves a metric distortion of at most $44$. In particular, \citet{KKK24a} highlighted that their rule is arguably simpler than any other rule with constant metric distortion discovered before. 

Here, we show that for $k \in (0,1)$ the $\lceil kn \rceil$-approval fallback bargaining rule attains the same metric distortion guarantees as $k$-plurality approval. In particular, the metric distortion of the majoritarian compromise is at most $5$, thereby improving the bound of \citet{AFP22b}. This further shows that there is an arguably even simpler rule than the one of \citet{KKK24a} achieving a (better) constant metric distortion: select the candidates who are ranked by a (weak) majority of voters among their top-$r$ ranks for the smallest such $r$. 
\begin{restatable}{theorem}{fallbackupper}
    Let $n \in \mathbb{N}$, $q \in [n]$, and $k \in (0,1)$ with $q-1 \leq kn\leq q$. Then, the distortion of $q$-approval fallback bargaining on instances with $n$ voters is at most $\max\left(\frac{2-k}{k}, \frac{3-k}{1-k}\right)$. 
\end{restatable}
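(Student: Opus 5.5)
The plan is to reuse the machinery of \Cref{thm:upperBound} and \Cref{lem:fractional_hall} with $\delta = 0$. Instead of the approval sets of the protocol, I would define a partition tailored to the winner's critical rank. Fix a winner $a$ of $q$-approval fallback bargaining, an arbitrary candidate $b \neq a$ (the case $b=a$ is trivial), and a consistent pseudo-metric $d$. Let $r^*$ be the minimal rank for $a$, so that at least $q$ voters rank $a$ in their top $r^*$. By minimality of $r^*$ over all candidates, every candidate, in particular $b$, is ranked among the top $r^*-1$ by at most $q-1$ voters. Set
\[
X = \{i \in N \colon \rank(i,a) \le r^*\}, \qquad Y = \{i \in N \colon \rank(i,b) \le r^*-1\},
\]
and define $N_{ab} = X \cap Y$, $N_a = X \setminus Y$, $N_b = Y \setminus X$ and $N_\emptyset = N \setminus (X \cup Y)$. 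The domination relation is the same as before: it depends only on $a$, $b$ and the rankings.

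Next I would verify the hypotheses of \Cref{lem:fractional_hall} and of \Cref{thm:upperBound} one by one.
\begin{enumerate}
\item Size of $N_a \cup N_{ab}$: we have $\lvert N_a \cup N_{ab}\rvert = \lvert X\rvert \ge q \ge kn$.
\item Comparison of $N_a$ and $N_b$: from $\lvert X\rvert \ge q > q-1 \ge \lvert Y\rvert$ we get $\lvert N_a\rvert = \lvert X\rvert - \lvert X\cap Y\rvert > \lvert Y\rvert - \lvert X \cap Y\rvert = \lvert N_b\rvert$.
\item Voters in $N_a$ prefer $a$ to $b$: such a voter has $\rank(a) \le r^* \le \rank(b)$, and the ranks cannot be equal since $a \ne b$.
\item $N_b$ dominates $N_a$: use the witness $c = a$, exactly as in \Cref{lem:b_out}.
\item $X$ dominates $N_\emptyset$: this is a counting argument replacing \Cref{lemmaD}. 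For $i \in X$, the set $L_i = \{c \colon a \succsim_i c\}$ has at least $m - r^* + 1$ elements. For $j \in N_\emptyset$, the set $\{c \colon c \succsim_j b\}$ has at least $r^*$ elements, because $b$ is not in $j$'s top $r^*-1$. Since $(m-r^*+1) + r^* > m$, the two sets intersect, and any common element witnesses that $i$ dominates $j$.
\end{enumerate}

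The step I expect to be the real content is the out-degree condition for $N_{ab}$, the analogue of \Cref{lemmaOutdegree}. Plurality scores are no longer available, so I would argue directly from minimality. Fix $i \in N_{ab}$ and suppose some voter $j$ is not dominated by $i$. Then every element of $L_i$ lies strictly below $b$ in $j$'s ranking. Since $\lvert L_i\rvert \ge m-r^*+1$, this forces $\rank(j,b) \le r^*-1$, that is, $j \in Y$. Hence the voters not dominated by $i$ form a subset of $Y$, which has size at most $q-1 \le kn$. So $i$ dominates at least $(1-k)n$ voters, and the condition holds with $\delta = 0$. Note that this is exactly where the hypothesis $q-1 \le kn$ is used, while $kn \le q$ was used for the size of $X$ in item 1.

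With all hypotheses in place, \Cref{lem:fractional_hall} yields a fractional assignment with parameter $t_0 = \max\left\{\frac{1-2k}{k}, \frac{k}{1-k}\right\}$. \Cref{thm:upperBound} then gives $\soc(a) \le (3+2t_0)\soc(b)$, and $3 + 2t_0 = \max\left(\frac{2-k}{k}, \frac{3-k}{1-k}\right)$. This concludes the proof, since the bound holds for every winner $a$, every candidate $b$ and every consistent $d$.
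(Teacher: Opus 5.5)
Your proposal is correct and follows essentially the same route as the paper. The paper uses the same partition by $\rank(i,a)\le r^*$ and $\rank(i,b)\le r^*-1$, and the same $(m-r^*+1)+r^*>m$ counting argument to show that every voter in $N_a\cup N_{ab}$ dominates everyone outside $N_{ab}\cup N_b$. That single fact yields both the $N_\emptyset$-domination and the $(1-k)n$ out-degree, via $\lvert N_{ab}\cup N_b\rvert\le q-1\le kn$. The paper then concludes, as you do, with \Cref{lem:fractional_hall} at $\delta=0$ and \Cref{thm:upperBound}.
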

\begin{proof}
    Let $a \in C$ be a winner according to $q$-approval fallback bargaining and $b \in C$ be any other candidate. Further, let $r \in [m]$ be the first rank for which $\lvert \{i \in N \colon \rank(i,a) \le r\}\rvert \ge q$. 
    We again partition our set of voters into four sets. For this proof, we partition based on which rank the voters assign to $a$ and $b$:
    \begin{itemize}
        \item $N_{ab} = \{i \in N : \rank(i,a) \le r \land \rank(i,b) \le r-1\}$
        \item $N_{a} = \{i \in N : \rank(i,a) \le r \land \rank(i,b) > r-1\}$
        \item $N_{b} = \{i \in N : \rank(i,a) > r \land \rank(i,b) \le r-1\}$
        \item $N_{\emptyset} = \{i \in N : \rank(i,a) > r \land \rank(i,b) > r-1\}$
    \end{itemize}
    Again these sets form a partition (with potentially empty partition sets) of the voters. To apply \Cref{lem:fractional_hall} we need to verify that these sets satisfy the same domination properties and size properties. 

    First, by definition, we know that $\lvert N_a \cup N_{ab}\rvert \ge q \ge kn$. Secondly, it must hold that $\lvert N_a \rvert \ge \lvert N_b\rvert$. Otherwise, $\lvert N_{ab} \cup N_b \rvert \ge q$ and therefore candidate $b$ would have been a winner at rank $r-1$.

    \begin{lemma}
    Every $i \in N_{ab}\cup N_a$ dominates every $j \in N_\emptyset \cup N_a$. 
    \end{lemma}
    \begin{proof}
        Let $U_i = \{x \in C\colon a \succsim_i x\}$ be the set of candidates that voter $i$ ranks weakly below $a$. This set contains at least $m - r + 1$ candidates. Further, the set $U_j = \{x \in C\colon x \succsim_j b\}$ contains at least $r$ elements. Hence, there must exist a candidate $c \in U_i \cap U_j$ for which $a \succsim_i c$ and $c \succsim_j b$ holds.
    \end{proof}
     Thus, since $\lvert N_{ab} \cup N_b\rvert \le q-1\le kn$, it follows that every $i \in N_{ab}\cup N_a$ dominates at least $(1-k)n$ voters. 
    
    Finally, every voter in $N_b$ dominates every voter in $N_a$. 
    \begin{lemma}
        Let $i \in N_b$ and $j \in N_a$ then $i$ dominates $j$.
    \end{lemma}
    \begin{proof}
        This again follows by $a \succsim_i a$ and $a \succsim_j b$. 
    \end{proof}

    Therefore, our partition has the same size and domination properties as for the previous proof. Hence, \Cref{lem:fractional_hall} also constructs a fractional assignment for this partition and since we additionally also have the property that every voter in $N_a$ prefers $a$ to $b$ \Cref{thm:upperBound} also applies. Hence, the distortion for $a$ is at most $\max (\frac{2-k}{k}, \frac{3-k}{1-k})$. 
\end{proof}
As a corollary we get that both Bucklin's rule and the majoritarian compromise achieve a metric distortion of at most $5$ and that one can achieve a distortion of $2 + \sqrt{5}$ via the $\lceil\frac{3 - \sqrt{5}}{2}n\rceil$-approval fallback bargaining rule.
\begin{restatable}{corollary}{fallbackcor}
    Bucklin's rule, the majoritarian compromise, and thus also the COORDINATION mechanism of \citet{AFP22b} achieve a metric distortion of at most $5$. The $\lceil\frac{3 - \sqrt{5}}{2}n\rceil$-approval fallback bargaining rule achieves a metric distortion of at most $2 + \sqrt{5}$.
\end{restatable}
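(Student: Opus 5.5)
The plan is to derive the corollary directly from the preceding fallback bargaining theorem, by choosing for each rule a value of $k$ that satisfies $q-1 \le kn \le q$ and then evaluating $\max\left(\frac{2-k}{k}, \frac{3-k}{1-k}\right)$.

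\emph{Majoritarian compromise, $q = \lceil n/2\rceil$.} We take $k = \frac12$. Then $kn = n/2$, and $\lceil n/2\rceil - 1 \le n/2 \le \lceil n/2 \rceil$, so the hypothesis holds. At $k=\frac12$ we have $\frac{2-k}{k} = 3$ and $\frac{3-k}{1-k} = 5$, so the bound is $5$.

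\emph{Bucklin's rule, $q = \lfloor n/2\rfloor + 1$.} Here $k=\frac12$ fails when $n$ is even, since then $q - 1 = n/2$ and we would need $kn \ge n/2$ together with $kn \le n/2+1$. That particular inequality is fine, so we check more carefully: we need $kn \in [q-1, q] = [\lfloor n/2\rfloor, \lfloor n/2\rfloor+1]$.
\begin{itemize}
\item For $n$ even, $kn = n/2$ lies in this interval, so $k=\frac12$ works.
\item For $n$ odd, $\lfloor n/2\rfloor = (n-1)/2 \le n/2 \le (n+1)/2$, so $k=\frac12$ works again.
\end{itemize}
In both cases the bound is $5$.

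\emph{COORDINATION mechanism.} \citet{AFP22b}'s mechanism is stated earlier in the paper to be equivalent to the majoritarian compromise, so it inherits the bound $5$. If its tie-breaking or refinement differs, it still selects a subset of the majoritarian compromise winners. The theorem bounds every winner, so the bound transfers (the same remark as the footnote on refinements).

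\emph{$\lceil k^* n\rceil$-approval fallback bargaining, $k^* = \frac{3-\sqrt5}{2}$.} We take $k = k^*$ and $q = \lceil k^* n\rceil$. Then $q - 1 < k^* n \le q$ trivially. The preceding corollary gives $\frac{2-k^*}{k^*} = \frac{3-k^*}{1-k^*} = 2+\sqrt5$.

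\emph{Main obstacle.} The only delicate point is matching each rule's integer threshold $q$ to a real $k$ that satisfies the sandwich condition $q-1 \le kn \le q$, uniformly in $n$. This is exactly why $k=\frac12$ covers both the weak and the strict majority thresholds. Since the theorem is stated per $n$, taking the supremum over all profiles gives distortion at most $5$ and at most $2+\sqrt5$ respectively.
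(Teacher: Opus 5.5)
Your proof is correct and follows the paper's argument: instantiate the fallback bargaining theorem with $k=\frac12$ (checking $q-1\le n/2\le q$ for both $q=\lceil n/2\rceil$ and $q=\lfloor n/2\rfloor+1$) and with $k=\frac{3-\sqrt5}{2}$, and note that COORDINATION always selects a majoritarian compromise winner. The only blemish is the opening claim in your Bucklin paragraph that $k=\frac12$ ``fails when $n$ is even''. That claim is false, and your own case check immediately refutes it, so just delete it.
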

\begin{proof}
In the COORDINATION mechanism of \citet{AFP22b} voters (essentially) increase their approval ballots until one candidate gets approved by at least $\frac{n}{2}$ voters. Then, according to some fixed tie-breaking among the candidates, one of these candidates with at least $\frac{n}{2}$ votes is selected. It is easy to see that this indeed selects a candidate from the majoritarian compromise rule with fixed tie-breaking.

    For both Bucklin's rule and the majoritarian compromise (i.e., the strong and weak majority quota), we notice that $q-1 \le \frac{1}{2}n \le q$ holds. Thus, both have a distortion of at most $5$. Similarly for $q = \lceil\frac{3 - \sqrt{5}}{2}n\rceil$ we get $q-1 \le \frac{3 - \sqrt{5}}{2}n \le q$ and thus a distortion of at most $2 + \sqrt{5}$.
\end{proof}

\subsection{Distortion Lower Bounds}
As our next result, we show that the previously obtained bounds for $k$-plurality approval are tight. For this, we construct two families of counterexamples, one for $k \le \frac{3 - \sqrt5}{2}$ and one for $k \ge \frac{3 - \sqrt5}{2}$.
\begin{restatable}{theorem}{protolower} \label{thm:proto_lower}
For every $k \in (0,1)$ and $\varepsilon > 0$ there exists a $k$-plurality approval winner with a distortion of at least 
    $\max (\frac{2-k}{k}, \frac{3-k}{1-k}) - \varepsilon$. 
\end{restatable}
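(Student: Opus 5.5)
We prove \Cref{thm:proto_lower} with two explicit families of instances, each making one term of the maximum tight; the value $\frac{3-\sqrt5}{2}$ only indicates which term dominates. Each instance uses a pseudo-metric in which voters and candidates sit at a few locations. Preferences are chosen consistent with the metric, and ties in distance are broken adversarially in the voters' rankings. We also exploit that the protocol returns \emph{all} approval winners, so it suffices that $a$ is tied for the maximum approval score. The additive $\varepsilon$ comes from rounding $kn$ to integers and from taking $n\to\infty$.

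\textbf{First family, target $\frac{2-k}{k}$.} Place $a$ and a point $p$ with $d(a,p)=2$. Put about $(1-k)n$ voters at $p$. Each of them has its own private clone candidate $c_j$ located at $p$, and we let $b$ be one of these clones. Put about $kn$ voters at the midpoint $m$, with $d(m,a)=d(m,p)=1$. Each clone voter ranks its own clone first, then $a$, then everything else. This is consistent because all other clones and $b$ are at distance $0$ from $p$, and ties may be broken so that $a$ comes next right after the own clone. Each midpoint voter ranks $a$ first, which is allowed since all candidates are at distance $1$ from $m$.

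In this instance $\plu(a)\approx kn$ and every clone has plurality score $1$. A clone voter's prefix $\{c_j\}$ has score $1<kn$, so the voter approves $\{c_j,a\}$ and reaches the threshold. A midpoint voter approves only $\{a\}$ once $\plu(a)\ge kn$. Hence $a$ is approved by everyone and wins. The social costs are $\soc(a)=k\cdot 1+(1-k)\cdot 2=2-k$ and $\soc(b)=k\cdot 1+0=k$ (both per voter), giving ratio $\frac{2-k}{k}$. The only delicate part is the rounding: we choose $\plu(a)=\lceil kn\rceil$ and let $n\to\infty$.

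\textbf{Second family, target $\frac{3-k}{1-k}=3+\frac{2k}{1-k}$.} This is the case where the $N_{ab}$/$N_b$ charging in \Cref{lem:fractional_hall} is tight. The plan is to realize the extremal configuration of \Cref{lemmaE} with $d(j,a)=3d(j,b)$ for the roughly $(1-k)n$ voters that do not approve $a$. These are voters at distance $1$ from $b$ and $3$ from $a$, and we want them to contribute essentially all of $\soc(b)$. The roughly $kn$ voters that approve $a$ should then cost $2$ under $a$ while costing nearly $0$ under $b$.

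To give $a$ a tied-for-maximum approval score without these $(1-k)n$ voters approving $a$, I would reuse the clone trick with approvals spread evenly. Each of those voters top-ranks a private clone. It then approves a block of about $kn$ further clones, all at equal distance, arranged cyclically so that every clone collects about $kn$ approvals. This matches the roughly $kn$ approvals of $a$, so $a$ is tied for first and is a winner.

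\textbf{Main obstacle.} The hardest step is placing the $a$-supporters so that they top-rank $a$ while being almost at $b$. A voter at $b$ strictly prefers $b$ unless $d(b,a)$ is small, which would destroy the cost $3$ for the far voters. I would first try to break ties via intermediate candidates that link $a$ and $b$, in the spirit of the domination witness $c$ in \Cref{lemmaC}. Concretely, the $a$-supporters sit at a point $x$ with $d(x,a)=d(x,b)$, and the far voters reach $b$ only through $x$. I would then tune the distances, possibly using a small parameter that tends to $0$, so that the ratio approaches $3+\frac{2k}{1-k}$. The approval counting is then checked exactly as in the first family.
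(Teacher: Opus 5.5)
Your proposal has a genuine gap. The first family has a consistency error that can be repaired. The second family is never actually built, and the route you sketch for it cannot work.

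\textbf{First family.} Your preferences are not consistent with your metric. A clone voter sits at $p$, where every clone (including $b$) is at distance $0$ and $a$ is at distance $2$. Consistency therefore forces every clone above $a$ in that voter's ranking. This is not a tie that can be broken in $a$'s favour.

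With consistent rankings, each clone voter approves only clones (there are enough of them when $k<\frac12$). So $a$ collects only the roughly $kn$ midpoint approvals and, as you set it up, need not win.

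The repair is the cyclic trick you reserve for your second family:
\begin{itemize}
\item let $n-\lceil kn\rceil-1$ voters at $p$ rank the clones in cyclic shifts, with $a$ last;
\item let $\lceil kn\rceil+1$ voters top-rank $a$.
\end{itemize}
Each clone then receives exactly $\lceil kn\rceil$ approvals and $a$ wins. This needs at least $\lceil kn\rceil$ clones, which holds for $k\le\frac{3-\sqrt5}{2}$ and large $n$. Your metric and cost computation then go through unchanged. This is the paper's instance for that regime.

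\textbf{Second family.} This is the real gap. In your design the far voters do not approve $a$, so by \Cref{lem:appr_lowerbound} at least $kn$ supporters sit at $x$. Suppose $d(x,a)=d(x,b)=\alpha$. Then $d(a,b)\le 2\alpha$, so $\soc(a)-\soc(b)\le 2\alpha(1-k)n$, while $\soc(b)\ge \alpha kn$. The ratio is therefore at most $\frac{2-k}{k}$. That is strictly below $\frac{3-k}{1-k}$ exactly when $k>\frac{3-\sqrt5}{2}$, which is the regime where this family is needed.

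Your load-balancing plan also fails for $k>\frac12$. The far voters' private clones carry total plurality only about $(1-k)n<kn$, so those voters cannot reach the threshold on clones alone.

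The missing idea is that the voters near $b$ need not top-rank $a$. A voter keeps extending the prefix as long as its plurality stays below $kn$. So build the instance as follows:
\begin{itemize}
\item Put $\lceil kn\rceil-2$ voters at $b$ with ranking $b\succ a\succ\cdots$. Then $\plu(b)=\lceil kn\rceil-2<kn$, so they approve $a$.
\item Let $n-\lceil kn\rceil+1$ far voters rank $c_i\succ b\succ a\succ\cdots$, with private $c_i$ of plurality $1$. They reach only $\lceil kn\rceil-1<kn$ before $a$, so they approve $a$ as well.
\item Let one last voter top-rank $a$.
\end{itemize}
Now every voter approves $a$, and no load balancing is needed.

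For the metric, use these distances:
\begin{itemize}
\item voters at $b$: $0$ to $b$ and $2$ to everything else;
\item far voters: $1$ to $b$ and $c_i$, and $3$ to everything else (so $a$ ties with the other $c_j$);
\item the last voter: $1$ to everything.
\end{itemize}
This gives
\[
\frac{\soc(a)}{\soc(b)}=\frac{2(\lceil kn\rceil-2)+3(n-\lceil kn\rceil+1)+1}{n-\lceil kn\rceil+2}\to\frac{3-k}{1-k}.
\]
So in the tight instance the far voters lie in $N_{ab}$, not among the voters who fail to approve $a$, as your plan assumes.
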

\begin{proof}
First, note that $\frac{2-k}{k} = \frac{2}{k}-1$ is a monotonically decreasing function and $\frac{3-k}{1-k} = 3 + \frac{2k}{1-k}$ is a monotonically increasing function. They intersect for $k =  \frac{3- \sqrt{5}}{2}$ so for the statement it is sufficient to find a family such that the metric distortion of a selected candidate by \Cref{protoGeneral} converges to $\frac{2-k}{k}$ for $k \leq  \frac{3- \sqrt{5}}{2}$ and one in which the metric distortion converges to $\frac{3-k}{1-k}$  for $k \geq \frac{3- \sqrt{5}}{2}$.

We start with the case that $k \geq \frac{3- \sqrt{5}}{2}$. 
Now consider the following family of instances:
Let $N = N_1 \cup N_2 \cup \{n\}$ and $C = \{c_1, \dots c_{n-2}\} \cup \{a,b\}$ with $\lvert N_1\rvert = \lceil kn \rceil -2$, and consequently $\lvert N_2\rvert =  n - \lceil kn \rceil + 1$.

We define the preferences of the voters as follows.
Each voter $i \in N_1$ has the preferences $b \succ_i a \succ_i c_1 \dots \succ_i c_{n-2}$.
Each voter $ i \in N_2$ has the preferences $c_i \succ_i b \succ_i a \succ_i c_1 \succ_i \dots \succ_i c_{i-1} \succ_i c_{i+1} \succ_i \dots \succ_i c_{n-2}$, and the final voter $n$ has the preferences $a \succ_n c_1 \succ_n \dots \succ_n c_{n-2} \succ_n b$. A visualization of the profile can be seen in  \Cref{tab:lbHalfPreliminary}.

First, we claim that $a$ is an approval winner. For this, observe that $a$ is approved by voter $n$. Furthermore, for every voter besides $n$, there are at most two candidates preferred to $a$, namely $b$ and for every voter $i \in N_2$ the candidate $c_i$.
Observe that by definition every candidate besides $b$ is the top-choice at most once and $b$ is only the top-choice for the voters in $N_1$ so precisely $\lceil kn \rceil -2$ times. Thus, it holds that no prefix without $a$ has cumulative plurality score of at least $kn$ and therefore every voter approves $a$. Therefore, $a$ is an approval winner and selected by \Cref{protoGeneral}.

In the second part we now want to argue that for $n \to \infty$ for the family of instances above there exists a corresponding family of metric spaces in which the social cost of $a$ converges against $\frac{3-k}{1-k}$ times the one of $b$.

For this given an instance with parameter $n$ consider the following metric space:

\begin{align*}
    &d(i, b)= 0, d(i,c) = 2, \text{ for all } c \in C \setminus \{b\}, i \in N_1 \\
    &d(i,b) = d(i,c_i) = 1,  d(i,c) = 3  \text{ for all } c \in C \setminus \{b,c_i\}, i \in N_2 \\
    &d(n, c) = 1 \text{ for all } c \in C. 
\end{align*}

This metric space clearly is consistent with the preferences. Furthermore, this defines a metric. Note, that this corresponds to the (0,1,2,3)-metrics defined in \citet{ChRa22a}.

Based on this we can now compute the social cost $\soc_n$ of the instance with $n$ voters. It holds that $\soc_n(b) = n -  \lceil kn \rceil + 2 \leq n -kn + 2 = (1-k)n + 2$.

On the other hand, the social cost of $\soc_n(a) = 2(\lceil kn \rceil-2) + 3 (n - \lceil kn \rceil+1) + 1  \geq 2kn -4 + 3(1-k)n + 1 \geq (3-k)n - 3$.

Thus,
$
\frac{\soc_n(a)}{\soc_n(b)} \geq  \frac{(3-k)n - 3}{(1-k)n + 2} \to \frac{3-k}{1-k}
$
for $n \to \infty$.

For the second case consider $k \leq \frac{3- \sqrt{5}}{2}$.

Let $N = N_1 \cup N_2$ and $C = \{c_1, \dots c_{n-\lceil kn\rceil -1}\} \cup \{a\}$. Further, we set $\lvert N_2\rvert = \lceil kn \rceil +1$, and consequently $\lvert N_1\rvert =  n - \lceil kn \rceil -1$.

Further, we set the preferences of the voters as follows. Each voter $i \in N_1$ has the preferences $c_{i \mod (n-\lceil kn \rceil-1)}  \succ_i \dots \succ_i c_{(n-\lceil kn \rceil-1)} \succ_i a$, and each voter $i$ in $N_2$ has the following preferences $a \succ_i c_{1} \succ_i \dots \succ_i c_{(n-\lceil kn \rceil-1)}$. A visualization of the preference profile can be seen in \Cref{tab:lbSmallK}.

First, we claim that $a$ is the unique approval winner. For this, observe that $a$ is approved by $\lceil kn\rceil + 1$ voters, namely every voter in $N_2$.  Finally, observe that each other candidate is approved at most $\lceil kn \rceil$ times.
This follows due to the following. Observe that the approval is symmetric for the voters in $N_1$ and thus each candidate gets approved by $\frac{\lceil kn \rceil}{\lvert C \setminus \{a\}\rvert} \lvert N_1 \rvert = \lceil kn \rceil$ many voters.  Further, no voter in $N_2$ approves any of these candidates since $a$ has a  plurality score of $\lceil kn \rceil + 1$.
Thus, $a$ is the unique approval winner.

For any such instance with $n$ voters consider the following metric:
\begin{align*}
&d(i, c)= 0, d(i,a) = 2, &\text{ for all } c \in C \setminus \{a\}, i \in N_1 \\
&d(i,c)= 1 &\text{ for all } c \in C, i \in N_2.
\end{align*}

Note that this metric space is consistent with the preferences. Furthermore, as seen before this defines a metric.
Next, we again want to calculate the ratio regarding the social cost of $a$ and $c_1$.
We have that
$\soc_n(a) = (\lceil kn \rceil+1) + 2(n-\lceil kn \rceil-1) \geq kn + 2n(1-k) - 2 = 2n-kn-2$ 
and
$\soc_n(c_1) =\lceil kn \rceil+1 \leq kn+2$. 

Thus,
$
\frac{\soc_n(a)}{\soc_n(c_1)} \geq  \frac{2n-kn-2}{kn+2} \to \frac{2-k}{k}
$
for $n \to \infty$. This concludes the proof.
\begin{table*}[t]
    \centering
    \begin{minipage}[t]{0.48\textwidth}
        \centering
        \begin{tabular}[t]{ccc}
            $\lceil kn \rceil-2$  & $1$ \; (each $i \in N_2$) & $1$ \\
            \hline
            $b$ & $c_i$ & $a$ \\
            $a$ & $b$ & $c_1$ \\
            $c_1$ & $a$ & $c_2$ \\
            $c_2$ & $c_1$ & $c_3$ \\
            $\vdots$  & $\vdots$ & $\vdots$  \\
            $c_i$ & $c_{i-1}$ & $c_{i+1}$ \\
            $c_{i+1}$ & $c_{i+1}$ & $c_{i+2}$ \\
            $\vdots$  & $\vdots$ & $\vdots$  \\
            $c_{n-2}$ & $c_{n-2}$ & $b$ \\
        \end{tabular}
        \caption{The instance $I_1$ establishing the bound
        $\frac{3-k}{1-k}$.
        }
        \label{tab:lbHalfPreliminary}
    \end{minipage}
    \hfill
    \begin{minipage}[t]{0.48\textwidth}
        \centering
        \begin{tabular}[t]{cc}
            $n-(\lceil kn \rceil + 1)$  & $\lceil kn \rceil + 1$ \\
            \hline
            $c_{i  \mod (n-\lceil kn \rceil-1)}$ & $a$   \\
            $\vdots$ & $c_1$  \\
            $c_{i + (n-\lceil kn \rceil-2) \mod (n-\lceil kn \rceil-1)}$ & $\vdots$   \\
            $a$  & $c_{ (n-\lceil kn \rceil-1)}$ \\
        \end{tabular}
        \caption{The instance $I_2$ establishing the bound $\frac{2-k}{k}$ for sufficiently large $n$. } 
        \label{tab:lbSmallK}
    \end{minipage}
\end{table*}

\end{proof}
We further show that one can modify these constructions to also apply to $q$-approval fallback bargaining to show the same lower bound.
\begin{restatable}{theorem}{fallbacklower} \label{thm:fallback_lower}
Let $n \in \mathbb{N}_{\ge 3}$ and $1 \le q \le n-1$. Then the distortion of $q$-approval fallback bargaining on instances with $n$ voters is at least $\max(\frac{2n - q}{q}, \frac{3n-q-1}{n-q+1})$. 
\end{restatable}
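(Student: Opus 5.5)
The bound is a maximum of two terms, so we will give two explicit profiles with exactly $n$ voters, one for each term. In each, some candidate $a$ is a $q$-approval fallback bargaining winner (ties are fine, since the rule outputs every minimizer), and we exhibit a consistent pseudo-metric realizing the stated ratio exactly. The constructions adapt the two instances from the proof of \Cref{thm:proto_lower}, replacing $\lceil kn\rceil$ by $q$ and avoiding the limit $n\to\infty$.

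\emph{First term $\frac{2n-q}{q}$.} Take two candidates $a,c$. Let $q$ voters report $a\succ c$ and $n-q\ge 1$ voters report $c\succ a$.
\begin{itemize}
\item[] Winner: $a$ is ranked first by $q$ voters, so it reaches the quota at $r=1$, the smallest possible rank. Hence $a$ is a winner, possibly tied with $c$.
\item[] Metric: put the first $q$ voters at distance $1$ from both candidates. Put the other $n-q$ voters at distance $0$ from $c$ and $2$ from $a$, with $d(a,c)=2$. This is the $(0,1,2,3)$-type metric used before, and it is consistent with the preferences.
\item[] Ratio: $\soc(a)=q+2(n-q)=2n-q$ and $\soc(c)=q$, so the ratio is $\frac{2n-q}{q}$.
\end{itemize}

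\emph{Second term $\frac{3n-q-1}{n-q+1}$.} Mimic instance $I_1$. Use candidates $a,b,c_1,\dots,c_{n-q}$; there is at least one $c_i$ because $q\le n-1$. Let $q-1$ voters have $b\succ a\succ c_1\succ\dots$. For each $i\in[n-q]$, one voter has $c_i\succ b\succ a\succ\dots$. One final voter has $a\succ c_1\succ\dots\succ c_{n-q}\succ b$.
\begin{itemize}
\item[] Rank counts: at $r=1$, $b$ has $q-1$ voters, while $a$ and each $c_i$ have one voter each. At $r=2$, $a$ has exactly $q$ voters: the last voter plus the $q-1$ voters of the first type.
\item[] Case $q\ge 2$: no candidate reaches $q$ at $r=1$, and $a$ reaches $q$ at $r=2$. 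Hence $a$ attains the minimum and is a winner; it does not matter that $b$ may also reach the quota at $r=2$.
\item[] Case $q=1$: $a$ already reaches the quota at $r=1$, so it is again a winner.
\item[] Metric: take the metric from the proof of \Cref{thm:proto_lower}. The $b$-first voters have $d(\cdot,b)=0$ and distance $2$ to everything else. The voter for $c_i$ has distance $1$ to $b$ and $c_i$ and $3$ to everything else. The last voter has distance $1$ to all candidates. This is consistent and a pseudo-metric, exactly as argued there.
\item[] Ratio: $\soc(b)=(n-q)+1$ and $\soc(a)=2(q-1)+3(n-q)+1=3n-q-1$, so the ratio is $\frac{3n-q-1}{n-q+1}$.
\end{itemize}

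The main point needing care is the winner verification for small $q$ in the second construction, namely whether $a$, $b$ or some $c_i$ hits the quota first. It is settled by the observation that the rule outputs all minimizers, so ties at the minimal rank do not exclude $a$. A secondary check is that the metrics satisfy the triangle inequality. This holds because all candidate–candidate distances lie in $[1,2]$ or equal $2$, and we can reuse the verification from \Cref{thm:proto_lower}.
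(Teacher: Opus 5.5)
Your proof is correct and follows the paper's strategy: two explicit profiles adapted from the proof of \Cref{thm:proto_lower}, with the same $(0,1,2,3)$-type metrics and the same social-cost computations.

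The one real difference is how the winner is certified. The paper uses $n-q$ distinct opponents $c_1,\dots,c_{n-q}$ in the first instance, and \emph{two} private dummies per middle voter in the second. This keeps every other candidate below the quota at the winning rank, so for $q\ge 3$ the candidate $a$ is the \emph{unique} winner. Your leaner instances only make $a$ a tied winner:
\begin{itemize}
\item In your first instance, $c$ also reaches $q$ at rank $1$ whenever $n-q\ge q$.
\item In your second instance, $b$ lies in the top two of all $n-1$ voters other than the last one, so it also reaches the quota at rank $2$.
\end{itemize}
This is legitimate for the statement as posed, as you note. The rule is defined to output all minimizers, and $\dist(f)$ takes the maximum over all winners. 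You also handle $q=1$ separately, which the paper's remark that $a$ ``is a winner at rank 2'' glosses over.

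What the paper's extra candidates buy is robustness. For $q\ge 3$, its lower bound survives arbitrary tie-breaking. It also survives the refined Bucklin variant from the footnote, which keeps only the candidates with the most voters in their top $r$. Under that refinement your instances would elect $c$ (for $n>2q$) or $b$ (for $2\le q<n-1$) instead of $a$.
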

\begin{proof}
    We again provide two constructions. 

    For the first construction, we can significantly simplify the corresponding construction from \Cref{thm:proto_lower}. Our instance consists of the candidates $a, c_1, \dots, c_{n-q}$ and voter sets $N_1$ and $N_2$ with $|N_2| = q$ and $|N_1| = n-q$.
    The $q$ voters in $N_2$ all have first choice $a$ with the ranking arbitrarily completed, while for each $c_i$ (with $i \in [n-q]$) there exists a single voter in $N_1$ who ranks $c_i$ first, then all other $c$ candidates, and finally $a$. 

    It is easy to see that $a$ is indeed the $q$-approval fallback bargaining winner. Again, as in the previous proofs, we set $d(i,c_j) = 0$ and $d(i,a) = 2$ for all $j \in [n-q]$ and $i \in N_1$ and $d(i,c)= 1$ for all $c \in C$ and $i \in N_2$. 

    This shows that the distortion of $a$ is $\frac{2n - q}{q}$.

    For the other case, we mostly follow the first construction from \Cref{thm:proto_lower}. 
    We again let $N = N_1 \cup N_2 \cup \{n\}$ with $\lvert N_1\rvert = q-1$ and $\lvert N_2\rvert = n-q$. Now for each $i \in N_2$ we add \emph{two} dummy candidates $c_i^1$ and $c_i^2$, as well as our two ``main''candidates $a$ and $b$. 

    Each voter in $N_1$ ranks $b$ and then $a$ with the rest of the preference completed arbitrarily. Each voter in $N_2$ first ranks the two dummy candidates belonging to this voter, then $b$ and then $a$ and finally voter $n$ first ranks $a$, then all dummy candidates, and finally $b$. It is easy to see that $a$ is a winner at rank $2$.

    For the distances we again follow the previous proof and require that 
    \begin{align*}
    &d(i, b)= 0, d(i,c) = 2, \text{ for all } c \in C \setminus \{b\}, i \in N_1 \\
    &d(i,b) = d(i,c^1_i) =  d(i,c^2_i)= 1,  d(i,c) = 3 \\ & \quad \quad \quad \qquad \qquad\qquad \text{ for all } c \in C \setminus \{b,c_i^1,c_i^2\}, i \in N_2 \\
    &d(n, c) = 1 \text{ for all } c \in C. 
\end{align*}
again leading to a distortion of $a$ of $\frac{3n-q-1}{n-q+1}$.
\end{proof}
We note that if the ratio $\frac{q}{n}$ goes to $k$ this lower bound goes to $\max (\frac{2-k}{k}, \frac{3-k}{1-k})$.

Finally, we show that in general, no ``approval voting protocol'' using plurality polls can get a distortion of better than $2+ \sqrt{5}$. For the purpose of this result we say that an approval voting protocol is a function $f$ that for each voter $i \in N$ maps their ordinal preference and the plurality vector $\plu$ to an approval set $f(\succ_i, \plu)$. We assume that the protocol is sincere (that is for each voter it always selects a prefix of their preference) and anonymous. For us this means that for any two voters $i, j \in N$ if for every rank $r$ both voters agree on the plurality score of the candidate they put into the $r$-th place respectively (note that this does not need to be the same candidate), then $|f(\succ_i, \plu)| = |f(\succ_j, \plu)|$. That is if the preferences of the voters are indistinguishable with regard to the plurality scores, then they must approve a prefix of their preference of the same size.
\begin{restatable}{theorem}{allprotocol} 
    For every $\varepsilon > 0$ every anonymous and sincere approval voting protocol using plurality polls must have a metric distortion of at least $2 + \sqrt{5} - \varepsilon$.
\end{restatable}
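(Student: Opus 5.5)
The plan is to force every anonymous sincere protocol to behave like a $k$-plurality approval protocol on a carefully chosen family of profiles. Then, depending on the effective $k$, one of the two constructions from \Cref{thm:proto_lower} gives distortion close to $\max\left(\frac{2-k}{k}, \frac{3-k}{1-k}\right)$, which is at least $2+\sqrt5$.

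\textbf{Step 1 (anonymity fixes the prefix length).} I restrict to profiles with $n$ voters and $n$ candidates (plus possibly dummy candidates with plurality score $0$ placed in fixed rank positions) in which every non-dummy candidate is the top choice of exactly one voter. If every voter also puts dummies in the same rank positions, then every voter sees the same sequence of plurality scores along their ranking. By anonymity, there is a single $\ell = \ell(n)$ such that every voter approves exactly their top-$\ell$ candidates, and $\ell$ depends only on the plurality vector. Since the plurality vector is the same across all profiles in the family, I can fix it first, read off $\ell$, and only then choose the preferences. Set $k := \ell/n$, up to a dummy correction.

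\textbf{Step 2 (two cases on $k$).} If $k \le \frac{3-\sqrt5}{2}$, I adapt instance $I_2$. About $kn$ voters each rank $a$ within their top $\ell$ positions; a simple way is to place $a$ at position $\ell$ behind their own distinct top candidate and shared filler candidates. The remaining $\approx(1-k)n$ voters rank a cyclic block of the other candidates first and $a$ last. I use a cyclic arrangement so that every other candidate collects at most as many approvals as $a$. Because winners are taken adversarially in the definition of distortion, ties are harmless. I then use the metric from $I_2$: the group of $a$-supporters is at distance $1$ from everything, and the others are at distance $0$ from the cluster and $2$ from $a$. This gives a ratio tending to $\frac{2-k}{k}$.

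If $k \ge \frac{3-\sqrt5}{2}$, I adapt $I_1$. About $kn$ voters have $b$ and then $a$ right after their private top choice. About $(1-k)n$ voters have their private top, then $b$, then $a$, with $a$ placed within their top $\ell$ positions. One voter ranks $a$ on top and $b$ last. The dummy candidates take the role of the $c_i$'s, so that $a$ is approved by everyone, or at least is a tied winner. I then use the $(0,1,2,3)$-metric from $I_1$, which yields a ratio tending to $\frac{3-k}{1-k}$.

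\textbf{Step 3 (conclusion).} Since $\max\left(\frac{2-k}{k},\frac{3-k}{1-k}\right) \ge 2+\sqrt5$ for every $k$, and the additive errors vanish as $n\to\infty$, some profile yields distortion at least $2+\sqrt5-\varepsilon$.

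\textbf{Main obstacle.} The hard part is Step 2: realizing both constructions while keeping the plurality vector uniform (each non-dummy candidate top-ranked exactly once) and the dummy rank positions identical across voters, so that anonymity really forces the common $\ell$. At the same time $a$ must be a (possibly tied) approval winner. In $I_1$ this is delicate because $b$ is no longer a plurality leader. My first attempt is to let $b$ itself be the private top choice of one voter. I would insert dummy candidates (plurality $0$) at identical positions in every ballot, so that $a$ sits exactly at position $\le \ell$ for the intended voters and at position $>\ell$ for the others. I also need to check the boundary cases $\ell\in\{0,1,n\}$ separately: $\ell=1$ reduces to plurality, and large $\ell$ degenerates towards approving everything. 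In both regimes one of the two constructions still applies with $k\to 0$ or $k\to 1$.
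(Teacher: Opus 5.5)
Your Step 1 is exactly how the paper begins: a uniform plurality vector forces one common cut-off $\ell$. Your small-$k$ case is essentially the paper's first case, with one caveat. Since every voter approves exactly $\ell$ of the $n$ candidates, $a$ with $\ell$ supporters wins only if every other candidate receives exactly $\ell$ approvals. So the ``shared filler candidates'' must be balanced exactly. Moreover, any dummy placed inside the top $\ell$ is approved by all $n$ voters and beats $a$. A careful cyclic design without dummies fixes this. The paper gets exact symmetry for free by using $m!$ voters, $(m-\ell)!$ for each ordered $\ell$-tuple.

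The genuine gap is the large-$k$ case, and it is structural rather than a matter of placing dummies. With ballots $x_i \succ b \succ a \succ \cdots$ and distinct private tops $x_i$, the $(0,1,2,3)$-metric is not consistent with your profile. Indeed, $x_i \succ_i b$ together with $d(i,b)=0$ forces $d(i,x_i)=0$, so $x_i$ sits at $b$. Then any other $b$-group voter $j$ has $d(j,x_i)=0<2=d(j,a)$, although $a \succ_j x_i$.

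No other metric rescues this profile. For $b$-group voters $i\neq j$, voter $i$ ranks $\top(j)=x_j$ below $a$, so $i$ dominates $j$, and \Cref{lemmaC} gives $d(i,a)\le d(i,b)+2d(j,b)$. Hence the $b$-group can be charged to itself, and the charging load spreads over almost all voters. As a result, $\soc(a)/\soc(b)$ stays close to $3$ rather than approaching $\frac{3-k}{1-k}$.

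The missing idea is to clone $b$. Take $B=\{b_1,\dots,b_{\ell-2}\}$ and $\Gamma=\{c_1,\dots,c_{n-\ell+1}\}$, with one voter per candidate, and use the ballots $b_i \succ B\setminus\{b_i\} \succ a \succ \Gamma$, $c_i \succ B \succ a \succ \Gamma\setminus\{c_i\}$, and $a \succ B \succ \Gamma$. This is the paper's second case, run with multiplicity one. It has the following properties:
\begin{itemize}
\item The plurality vector stays uniform.
\item $a$ is at position at most $\ell$ on every ballot (exactly $\ell$ for the $\Gamma$-voters), so it is approved by everyone.
\item Every voter other than the $a$-voter ranks all of $B$ above $a$. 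So only the $a$-voter, who needs no charging, dominates a $B$-voter, and all charges land on the roughly $(1-k)n$ voters whose tops lie in $\Gamma\cup\{a\}$.
\item The $(0,1,2,3)$-metric with all of $B$ at a single point gives $\soc(a)/\soc(b_1)=\frac{3n-\ell}{n-\ell+2}\to\frac{3-k}{1-k}$.
\end{itemize}
This works precisely because the number of $b$-type voters required, about $kn$, matches the $\ell-2$ slots available in front of $a$. No dummies are needed.
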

\begin{proof}
    Formally, an approval voting protocol maps a vector of normalized plurality scores to a cut-off rank, with the voter then approving each candidate with at most this rank. This, particularly, means that in an election in which every candidate has the same plurality score, every voter must approve the same amount of candidates. 

    Now, consider an election with $m$ candidates and $m!$ voters such that each candidate has $(m-1)!$ first choice votes. Let $\ell$ be the number of candidates approved by each voter. Just as in \Cref{thm:proto_lower} we will construct two instances to certify our lower bound. 

    \textbf{Case 1: } Let $T_\ell$ be the set of ordered tuples of length $\ell$ over the candidates in $C$. For each $T \in T_\ell$ we add $(m-\ell)!$ voters ranking $T$ at the top of their preference list. It is easy to see that these are indeed $(m!/(m-\ell)!) \cdot (m-\ell)! = m!$ many voters and that every candidate receives the same plurality score. Further, since every voter approves $\ell$ candidates, also every candidate receives the same approval score. Let $a$ be any candidate. Now further consider the instance in which every voter who did not have $a$ in their top $\ell$ choices puts $a$ at the bottom of their preference list, while the rest of the preference lists are completed arbitrarily. 

    For this, we can consider the following line metric: every candidate except for $a$ is at $0$, $a$ is at $1$, every voter approving $a$ is at $\frac{1}{2}$ and every other voter is at $0$. As there are $\ell (m-1)!$ voters approving $a$ this leads to $a$ having a social cost of $m! - \frac{1}{2}\ell(m-1)!$ while every other candidate has a social cost of $\frac{1}{2}\ell(m-1)!$. Hence, the distortion of $a$ is 
    \[
    \frac{m! - \frac{1}{2}\ell(m-1)!}{\frac{1}{2}\ell(m-1)!} = \frac{2m - \ell}{\ell}.
    \]

    \textbf{Case 2: } This case mostly follows the first case from \Cref{thm:proto_lower}. However, now to ensure that every candidate has the same number of first choice votes we need to add clones of the $b$ voters. Thus, we choose the set of candidates to be $C = \{c_1, \dots, c_{m - \ell + 1}, a, b_1, \dots, b_{\ell-2}\}$. We denote by $\Gamma = \{c_1, \dots, c_{m - \ell + 1}\}$ the $c$ candidates and $B = \{b_1, \dots, b_{\ell-2}\}$ the $b$ candidates and add the following voters and distances:
    \begin{itemize}
        \item For each $i \in [m - \ell + 1]$ we add $(m-1)! $ voters $v_{c_i}$ with preference $c_i \succ B \succ a \succ \Gamma \setminus \{c_i\}$ with $d(v_{c_i}, c_i) = 1, d(v_{c_i}, B) = 1, d(v_{c_i}, a) = 3, d(v_{c_i}, \Gamma \setminus \{c_i\}) = 3$. 
        \item For each $i \in [\ell-2]$ we add $(m-1)!$ voters $v_{b_i}$ with preference $b_i \succ B \setminus \{b_i\} \succ a \succ \Gamma$ with $d(v_{b_i}, B) = 0, d(v_{b_i}, a) = 2, d(v_{b_i}, \Gamma) = 2$.
        \item We add $(m-1)!$ voters $v_a$ with preference $a \succ B \succ \Gamma$ with $d(v_{a}, a) = 1, d(v_{a}, B) = 1, d(v_{a}, \Gamma) = 1$. 
    \end{itemize}
    We note that every voter has $a$ among their top $\ell$ choices and is therefore a (tied) approval winner, as every voter approves it. Further, the social cost of $b_1$ is $(m-1)!(m - \ell +2)$ while the social cost of $a$ is $(m-1)! 3(m - \ell +1) + (m-1)! (\ell - 2)2 + (m-1)! = (m-1)!(3m - \ell)$ and hence the distortion of $a$ is $\frac{(3m - \ell)}{(m - \ell +2)}$. Hence, depending on the number of approvals $\ell$ the distortion of any protocol is at least $\min_{\ell \in [m]}\max(\frac{2m - \ell}{\ell},\frac{(3m - \ell)}{(m - \ell +2)})$ which goes to $2 + \sqrt5$ as $m$ goes to infinity. 
\end{proof}

\section{Robustness to Noise}
The previous section makes the assumption that we can elicit the precise plurality scores as they are in the final election. This, however, seems unrealistic in practice. In particular, polls are typically not reflective of the exact preference distribution in the population and therefore contain errors. 

Instead, we assume that polls are noisy representations of the real plurality scores. To measure the impact of such noise we study two models. First, we take a deterministic viewpoint: we assume that we are given a (not necessarily correct) \emph{prediction poll} and that the voters vote according to this poll and the $k$-plurality approval protocol. Now, we ask: what is the guaranteed worst-case metric distortion we get from the protocol as a function of the error of prediction? Formally for us, for a given set of candidates, a prediction is a probability distribution over $C$ that is a vector $q = (q(c_1), \dots, q(c_m)) \geq 0$ with $\sum_{i = 1}^m q(c_i) = 1$. Now given such a prediction $q$ and parameter $k \in (0,1)$ in the $(q,k)$-plurality approval protocol, every voter $i \in N$ approves the smallest prefix $A_i$ of their preference such that $\sum_{c \in A_i} q(c) \ge k$. 
That is, we replace the plurality scores by our prediction. Now, we evaluate the accuracy of our prediction by the distance to the original (normalized) plurality vector $p = (\plu(c_1)/n, \dots, \plu(c_m)/n)$. 
To compare the two distributions, we compare their distance on the prefixes of voters' preferences. For a given voter $i \in N$ and candidate $c \in C$ we let $U_{i,c} \coloneqq \{c' \in C\colon c' \succsim_i c\}$ denote the upper contour set of $c$ in $\succ_i$. Now the distance of $q$ to $p$ is their maximum difference among the ``probability mass'' distributed among all upper contour sets in the preference profile: $\Delta(q,p) = \max_{i \in N, c\in C} \lvert \sum_{c' \in U_{i,c}} \left(q(c') - p(c') \right) \rvert$. 

In particular, we show that if $\Delta(q,p) \le \varepsilon$ the metric distortion of the $(q,k)$-plurality approval protocol is at most  $\max\left(\frac{2-k}{k}, \frac{3 - k - \varepsilon}{1 - k - \varepsilon}\right)$. As a consequence the upper bound on the metric distortion is continuous in the error of the prediction. We visualize the resulting distortion bounds for various values of $\varepsilon$ in \Cref{fig:distbound-error}.  Our proof again follows by showing that the sets constructed from the approval relation satisfy the conditions for \Cref{lem:fractional_hall}, this time with $\delta = \varepsilon$. 
\begin{figure}[t]
    \centering
    \newcommand{\addrobustcurve}[3]{%

        \pgfmathsetmacro{\xend}{1-#1-0.001}%
        \pgfmathsetmacro{\kopt}{
            (3-#1-sqrt((#1)^2-2*#1+5))/2
        }%
        \pgfmathsetmacro{\dopt}{(2-\kopt)/\kopt}%

        \addplot[
            #2,
            very thick,
            domain=0.05:\xend
        ] {
            max(
                (2-x)/x,
                (3-x-#1)/(1-x-#1)
            )
        };
        \addlegendentry{$\varepsilon=#1$}

        \addplot[
            only marks,
            mark=*,
            mark size=1.8pt,
            #3,
            forget plot
        ] coordinates {(\kopt,\dopt)};
    }

    \begin{tikzpicture}
    \begin{axis}[
        width=\columnwidth,
        height=0.78\columnwidth,
        xlabel={$k$},
        ylabel={Distortion},
        xmin=0,
        xmax=1,
        ymin=1,
        ymax=10,
        samples=250,
        axis lines=left,
        grid=both,
        minor tick num=1,
        grid style={gray!20},
        legend style={
            at={(0.98,0.98)},
            anchor=north east,
            draw=none,
            fill=white,
            font=\small,
            row sep=-1pt,
        },
        restrict y to domain=0:10,
        unbounded coords=jump,
    ]

        \addrobustcurve{0.0}
            {blue}
            {blue}

        \addrobustcurve{0.1}
            {red!80!black, dashed}
            {red!80!black}

        \addrobustcurve{0.2}
            {teal!80!black, dashdotted}
            {teal!80!black}

        \addrobustcurve{0.4}
            {orange!90!black, densely dotted}
            {orange!90!black}

    \end{axis}
    \end{tikzpicture}

    \caption{Visualization of the distortion upper bound in \Cref{thm:error_upper} for different values of $\varepsilon$. Observe that for increasing $\varepsilon$ the optimal $k$-value decreases.}
    \label{fig:distbound-error}
\end{figure}

\begin{restatable}{theorem}{errorupper} \label{thm:error_upper}
    Let $q$ be a prediction poll and $\varepsilon \in [0,1)$ with $\Delta(q,p) \le \varepsilon$. Then for any $k \in (0, 1 - \varepsilon)$ the metric distortion of any $(q,k)$-plurality approval winner is at most $\max\left(\frac{2-k}{k}, \frac{3 - k - \varepsilon}{1 - k - \varepsilon}\right)$.
\end{restatable}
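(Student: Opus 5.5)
The plan is to reuse the template of \Cref{theoremBound}. I would fix a $(q,k)$-plurality approval winner $a$, an arbitrary candidate $b$ and a consistent pseudo-metric $d$. I would partition the voters into $N_{ab}, N_a, N_b, N_\emptyset$ exactly as in \Cref{def:voter_partitions}, but now with respect to the approval sets $A_i$ of the $(q,k)$-protocol. Then I would check the five hypotheses of \Cref{lem:fractional_hall} with $\delta = \varepsilon$. This choice is admissible because $k < 1-\varepsilon$ gives $\varepsilon \in [0, 1-k)$. Every voter in $N_a$ approves a prefix containing $a$ but not $b$, so they prefer $a$ to $b$. Hence \Cref{thm:upperBound} applies and yields distortion at most $3 + 2t_\varepsilon$. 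Since
\[
3 + 2\,\frac{1-2k}{k} = \frac{2-k}{k}
\qquad\text{and}\qquad
3 + 2\,\frac{k+\varepsilon}{1-k-\varepsilon} = \frac{3-k-\varepsilon}{1-k-\varepsilon},
\]
this is exactly the claimed bound.

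Most hypotheses go through using only $q$, with no reference to $p$. For $\lvert N_a \cup N_{ab}\rvert \ge kn$, I would rerun the double-counting of \Cref{lem:appr_lowerbound} with $q$ in place of $p$. Every voter satisfies $\sum_{c\in A_i} q(c) \ge k$, so
\[
kn \le \sum_{c} \lvert A_c\rvert\, q(c) \le \lvert A_a\rvert \sum_c q(c) = \lvert A_a\rvert .
\]
The inequality $\lvert N_a\rvert \ge \lvert N_b\rvert$ holds because $a$ is an approval winner. The statement that $N_b$ dominates $N_a$ is verbatim \Cref{lem:b_out}. For the statement that $N_a \cup N_{ab}$ dominates $N_\emptyset$, I would copy \Cref{lemmaD} with $q$-masses. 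Since $i$ approves $a$, the set of candidates strictly above $a$ has $q$-mass $< k$, so $L_i = \{x \colon a \succsim_i x\}$ has $q$-mass $> 1-k$. Meanwhile $A_j$ has $q$-mass $\ge k$, so $L_i \cap A_j \neq \emptyset$. Any $c$ in the intersection satisfies $a \succsim_i c$ and $c \succ_j b$, because $j$ does not approve $b$.

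The only place where the error $\Delta(q,p)\le\varepsilon$ enters, and the step I consider the main obstacle, is the out-degree condition: each $i \in N_{ab}$ must dominate at least $(1-k-\varepsilon)n$ voters. I would argue as in \Cref{lemmaOutdegree}, but first transfer the mass bound from $q$ to $p$. The set $C \setminus L_i$ of candidates strictly above $a$ in $\succ_i$ is either empty or an upper contour set $U_{i,c'}$, where $c'$ is the candidate immediately preceding $a$. In the nonempty case, $\Delta(q,p)\le\varepsilon$ gives
\[
p(C\setminus L_i) \le q(C\setminus L_i) + \varepsilon < k + \varepsilon ,
\]
and hence $p(L_i) > 1-k-\varepsilon$. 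In the empty case, $p(L_i) = 1$, which is stronger. Therefore more than $(1-k-\varepsilon)n$ voters $j$ have $\top(j) \in L_i$. For each of them, $c = \top(j)$ witnesses $a \succsim_i c$ and $c \succsim_j b$, so $i$ dominates $j$. Care is needed here only in identifying $C\setminus L_i$ as an upper contour set, so that the definition of $\Delta$ applies. With all hypotheses verified, \Cref{lem:fractional_hall} and \Cref{thm:upperBound} finish the proof.
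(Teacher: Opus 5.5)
Your proposal is correct and follows the paper's proof. You partition the voters by the $(q,k)$-approval sets, verify the hypotheses of \Cref{lem:fractional_hall} with $\delta=\varepsilon$ (the only poll-dependent one being the out-degree bound for $N_{ab}$), and conclude via \Cref{thm:upperBound}. Your route through the complement $C\setminus L_i$ as an upper contour set of $i$ is a slightly more careful version of the paper's step, which applies $\Delta(q,p)\le\varepsilon$ directly to the lower contour set $L_i$ (this is valid since $p$ and $q$ both sum to one).
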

\begin{proof}
    We again let $N_a, N_b, N_{ab}, N_\emptyset$ as for the standard $k$-plurality approval protocol. The only property that does not immediately follow from the $k$-plurality approval is that every voter in $N_{ab}$ dominates at least $(1 - k - \varepsilon)n$ many voters. To see this, let $i \in N_{ab}$ and again $L_i = \{x \in C  \colon a \succsim_i x\}$. By definition of the protocol we know that $\sum_{c \in L_i} q(c) \ge (1-k)$. Since $\Delta(q,p) \leq \varepsilon$ this also shows that $\sum_{c \in L_i} \plu(c)/n \ge (1-k - \varepsilon)$ and thus $\sum_{c \in L_i} \plu(c) \ge (1-k - \varepsilon)n$.  There are therefore at least $(1-k - \varepsilon)n$ such voters $j$ and with $c=\top(j)$ we again get both $a\succsim_i c$ and $c\succsim_j b$. Hence, $i$ dominates at least $(1-k - \varepsilon)n$ different voters. 

    Thus, from \Cref{lem:fractional_hall} it follows that a fractional assignment exists with parameter
    \[
    t_\varepsilon  \coloneqq \max\left\{
        \frac{1-2k}{k},
        \frac{k+\varepsilon}{1-k-\varepsilon}
    \right\}.
    \]
    Thereby, we get our desired distortion upper bound.
\end{proof}

Interestingly, the error term affects only one of the two functions in our metric distortion bound. The reason is that only \Cref{lemmaOutdegree} requires that $p$ corresponds to the actual plurality scores since it uses that the plurality score corresponds to the number of voters who have a candidate $c  \in C$ as a top choice in order to lower bound the outdegree of each voter in $N_{ab}$ by $(1-k)n$. All other properties we rely on depend on the poll only through the fact that its entries sum to one.
Since the outdegree bound is what governs the second function in the maximum, only that function acquires a dependency on $\varepsilon$.

As an immediate corollary, this shows that if the total variation distance between the poll and the original normalized plurality vector is at most $\varepsilon$ the metric distortion is also upper bounded by $\max\left(\frac{2-k}{k}, \frac{3 - k - \varepsilon}{1 - k - \varepsilon}\right)$.

One canonical way to elicit such a noisy prediction $q$ is via sampling. That is, we choose a sample size $r \in [n]$, sample a set $S_r$ of $r$ voters uniformly at random and for each candidate $c \in C$ set $q(c) = |\{i \in S_r\colon \top(i) = c\}|/|S_r|$. We refer to the corresponding distribution over polls as $\mathbf{Q}(r)$. One result we immediately get from \Cref{thm:error_upper} is an upper bound on the sample size needed to achieve a distortion close to that of $k$-plurality approval via a standard application of concentration inequalities.
\begin{restatable}{theorem}{randomBounds}
    Let $\varepsilon \in  (0,1)$, $\delta \in (0,1)$ and $k \in (0,1- \varepsilon)$. Then for $r \ge \ln(2nm/\delta)/(2\varepsilon^2)$,  $r \in [n]$ with probability at least $(1 - \delta)$ the distortion of the $(q, k)$-plurality approval winner with $q$ sampled from $\mathbf{Q}(r)$ is at most $\max\left(\frac{2-k}{k}, \frac{3 - k - \varepsilon}{1 - k - \varepsilon}\right)$.
\end{restatable}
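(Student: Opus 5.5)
The plan is to reduce the statement to \Cref{thm:error_upper}. It suffices to show that, with probability at least $1-\delta$, the sampled poll $q \sim \mathbf{Q}(r)$ satisfies $\Delta(q,p) \le \varepsilon$. On that event, \Cref{thm:error_upper} applied with this $\varepsilon$ and $k \in (0,1-\varepsilon)$ directly gives the claimed distortion bound for every $(q,k)$-plurality approval winner.

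To bound $\Delta(q,p)$, I observe that the relevant sets are few. The maximum in $\Delta(q,p)$ ranges over upper contour sets $U_{i,c}$ with $i \in N$ and $c \in C$, so there are at most $nm$ distinct sets $U \subseteq C$.

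Fix one such set $U$. Let $X_s$ be the indicator that the $s$-th sampled voter has $\top(\cdot) \in U$. Then $\sum_{c' \in U} q(c') = \frac{1}{r}\sum_{s=1}^r X_s$, and each $X_s$ has mean $\sum_{c' \in U} p(c')$, since a single uniformly drawn voter has top choice in $U$ with exactly that probability. Hoeffding's inequality then gives
\[
\Pr\Bigl[\Bigl\lvert \textstyle\sum_{c' \in U} (q(c') - p(c')) \Bigr\rvert > \varepsilon\Bigr] \le 2\exp(-2r\varepsilon^2) \le \frac{\delta}{nm},
\]
where the last step uses $r \ge \ln(2nm/\delta)/(2\varepsilon^2)$. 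A union bound over the at most $nm$ sets then yields $\Pr[\Delta(q,p) > \varepsilon] \le \delta$, which is what we need.

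The main subtlety is the sampling model. If $S_r$ is drawn uniformly without replacement, as the phrase ``a set $S_r$'' suggests, the $X_s$ are not independent. I would invoke Hoeffding's classical result that his tail bound also holds for sampling without replacement from a finite population; equivalently, the indicators are negatively associated, so the Chernoff–Hoeffding bound still applies. If sampling is with replacement, the bound is immediate. A second minor point is that the bound is applied deterministically on the good event: \Cref{thm:error_upper} holds for any poll $q$ with $\Delta(q,p)\le\varepsilon$, so the randomness in $q$ does not interact with the adversarial choice of the metric.
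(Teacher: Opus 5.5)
Your proposal is correct and follows the paper's proof essentially step for step. The three steps are the same: a two-sided Hoeffding bound for each of the at most $nm$ upper contour sets, using its without-replacement form as the paper does via Hoeffding's Theorem 4; a union bound giving $\Delta(q,p)\le\varepsilon$ with probability at least $1-\delta$; and a deterministic application of \Cref{thm:error_upper} on that event.
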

 \begin{proof}
 Given any prefix $S$ that occurs in $\succ$. We want to bound the likelihood that $\lvert q(S) -p(S) \rvert > \varepsilon$ by $\frac{\delta}{nm}$. For this, we want to use Hoeffding's inequality which also holds in the case without replacement \citep[Theorem 4]{Hoe63a}. We consider the following Bernoulli distribution. We uniformly draw $r$ voters in $N$ and have a value of $1$ if the top-choice of the selected voter lies in $S$ and $0$ otherwise. Then, we get for the empirical distribution $q$ that 
$\Pr[\lvert q(S) -p(S) \rvert \geq \varepsilon] \leq 2 \exp(-\frac{2\varepsilon^2r^2}{r}).$
 Thus, we want to show that 
$
 2 \exp(-{2\varepsilon^2r}) \leq \frac{\delta}{nm} $
 which simplifies to 
$r  \geq  \frac{1}{2\varepsilon^2}\ln(\frac{2nm}{\delta}).$

 Now using the union bound for the at most $nm$ prefixes that can occur in $\succ$ we get that with probability at most $\delta$ $\lvert q(S) -p(S) \rvert \geq \varepsilon$ for any prefix $S$. Thus, with probability at least $1- \delta$ we can apply \Cref{thm:error_upper}.
 \end{proof}
A second way to interpret the sampling process is as a randomized voting rule: that is, we sample a set $S_r$ uniformly at random, elicit the plurality distribution on the sample and then elect a winner from the respective set of approval winners. For this we fix any tie-breaking rule (the bounds we give hold for any such rule).
This gives us a randomized social choice rule that does not return a set of candidates but a probability distribution over them.

For a probability distribution $\pi$ over $C$ we define the metric distortion of $\pi$ as 
$$\dist(\pi, \succ)=\sup_{d \in \mathcal{D}(\succ)} \frac{\sum_{c\in C} \pi(c)\soc(c,d)}{\min_{e \in C}\soc(e,d)}.$$

Formally, for a given candidate $c$ let $\Pi^r_k(c)$ be the probability that $c$ is the $(q,k)$-plurality approval winner when $q$ is sampled from $\mathbf{Q}(r)$. 
In the following we provide an upper bound on the metric distortion of $\Pi^r_k$ that guarantees that for any sample size $r$ the metric distortion is at most twice our upper bound from the deterministic setting (\Cref{theoremBound}).\footnote{Note that this is different (and easier to achieve) from the expected distortion of the selected candidates.} 

\begin{restatable}{theorem}{expectedTheorem}
\label{thm:expected}
Let $\succ$ be a preference profile with $n$ voters,
  let $r \in [n]$ be the sample size, $k \in (0,1)$, and
  $t = \max\left(\frac{1-2k}{k}, \frac{k}{1-k}\right)$. Then
  $$\dist(\Pi^r_k, \succ) \le 1 + 2(t+1)\left(2 - \frac{r}{n}\right).$$
\end{restatable}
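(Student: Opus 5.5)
The plan is to prove a deterministic, sample-wise inequality and then average over the random sample. Fix a sample $S = S_r$, let $q$ be the induced poll, let $a$ be the (tie-broken) $(q,k)$-plurality approval winner, and let $b$ be arbitrary. I want to show
\[
\soc(a) \le \sum_{i\in N} d(i,b) + 2(t+1)\sum_{j \in N} \lambda_j\, d(j,b), \qquad \lambda_j = 1 + \mathbf{1}[j\in S]\Bigl(\frac{n}{r}-1\Bigr).
\]
Since each voter lies in $S$ with probability $r/n$, we have $\mathbb{E}[\lambda_j] = 1 + \frac{r}{n}\bigl(\frac{n}{r}-1\bigr) = 2 - \frac{r}{n}$. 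Linearity of expectation then gives $\sum_c \Pi^r_k(c)\soc(c) \le \bigl(1 + 2(t+1)(2-\frac{r}{n})\bigr)\soc(b)$ for every $b$, and hence for the optimal $b$. That is exactly the claimed bound, and it holds uniformly over all consistent metrics.

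For the sample-wise inequality I partition $N$ into $N_a, N_b, N_{ab}, N_\emptyset$ as in \Cref{def:voter_partitions}, using the approval sets computed from $q$. The facts $\lvert N_a\cup N_{ab}\rvert \ge kn$ (\Cref{lem:appr_lowerbound}), $\lvert N_a\rvert\ge\lvert N_b\rvert$, \Cref{lemmaD} and \Cref{lem:b_out} use the poll only through $\sum_c q(c)=1$, so they carry over verbatim. This is the observation made after \Cref{thm:error_upper}.

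The replacement for \Cref{lemmaOutdegree} is the following. For $i\in N_{ab}$ we have $q(L_i)\ge 1-k$. Hence at least $(1-k)r$ \emph{sampled} voters have their top choice in $L_i$, and all of them are dominated by $i$.

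Next I run the Hall argument of \Cref{lem:fractional_hall} with nonuniform capacities. A voter $j\notin N_\emptyset$ gets capacity $(t+1)\lambda_j$. A voter $j\in N_\emptyset$ gets capacity $(t+1)\lambda_j - 1$, which is at least $t$; the reserved unit accounts for the extra $2d(j,b)$ that \Cref{lemmaE} charges to $N_\emptyset$ voters themselves. The first, second and fourth Hall cases only use lower bounds on capacities by $t+1$ (respectively $t$), so they go through unchanged. In the third case, a voter of $N_{ab}$ reaches at least $(1-k)r$ sampled dominated voters. Each has capacity at least $(t+1)\frac{n}{r}$, minus $1$ if it lies in $N_\emptyset$. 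This gives total capacity at least $(t+1)(1-k)n - \lvert N_\emptyset\rvert \ge n - \lvert N_\emptyset\rvert$, exactly as before, using $(t+1)(1-k)\ge 1$.

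With this assignment in hand, I repeat the computation of \Cref{thm:upperBound} term by term. The $N_a$ voters contribute $d(i,b)$. The other voters contribute $d(i,b)$ plus $2d(i,b)$ for $N_\emptyset$, plus the $2d(j,b)$ charges, whose total weight on $j$ is at most $(t+1)\lambda_j$ after absorbing the reserved unit. This yields the displayed inequality.

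The step I expect to be delicate is the capacity bookkeeping in the Hall cases, and specifically the third case. The dominated sampled voters may lie in $N_\emptyset$ or in $N_a$, and I must make sure the "$-1$" reserved for $N_\emptyset$ voters does not break the bound $n-\lvert N_\emptyset\rvert$. If that bookkeeping fails, my fallback is to assign $N_{ab}$ voters only to sampled voters outside $N_\emptyset$ plus all of $N_a\cup N_{ab}$, and to re-derive the inequality with $t\ge 1-k$ as in the fourth case of \Cref{lem:fractional_hall}.
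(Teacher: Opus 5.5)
Your proposal is correct and follows the paper's proof essentially step for step: your $\lambda_j$ is the paper's $\sigma(j)$, your capacities $(t+1)\lambda_j$ and $(t+1)\lambda_j-1$ are exactly conditions ii*.)--iii*.), and the only Hall case that changes is the third, handled with the at least $(1-k)r$ sampled dominated voters. The final averaging via $\mathbb{E}[\lambda_j]=2-\frac{r}{n}$ is also the same, and your third-case worry is already settled by your own count: the reserved unit is subtracted at most $\lvert N_\emptyset\rvert$ times, so the fallback is unnecessary.
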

\begin{proof}
The proof is based on the following idea. If we have the full plurality poll, the proof of \Cref{lem:fractional_hall} shows how to find an allocation. We will use this matching and adapt it as follows. In order to find this matching we split the set of voters into four subsets. Then, we showed that every voter in $N_{ab} \cup N_a$ dominates every voter in $N_\emptyset$. Observe that for this statement the actual plurality poll is completely irrelevant. The same holds for the statement, that $\lvert N_b \rvert \leq \lvert N_a \rvert$. The only thing that changes is the number of voters a voter in $N_{ab} \cup N_a$ dominates. 
The proof of this statement is based on the fact that the plurality score of candidates not selected is at least $(1-k)n$. But since we only used a sample of size $r$ those values might be inflated by a factor of $\frac{n}{r}$. To compensate for this we instead try to find an allocation with the following properties. 
For this denote by $S_r$ the voters that have been polled with poll size $r$ and  by $\sigma(j) = \frac{n}{r}$ if $j \in S_r$ and $\sigma(j) = 1$ if $j \notin S_r$.
Again as before we define $t \coloneqq \max\left\{
        \frac{1-2k}{k},
        \frac{k}{1-k}
    \right\}$:

\begin{itemize}
    \item[i*.)] $\sum_{j \in N} w(i,j) = 1$ for every $i \in N_b \cup N_{ab} \cup N_\emptyset$;
    \item[ii*.)] $\sum_{i \in N_b \cup N_{ab} \cup N_\emptyset} w(i,j) \le (t + 1)\sigma(j)$ for all $j \in N \setminus N_\emptyset$;
    \item[iii*.)] $\sum_{i \in N_b \cup N_{ab} \cup N_\emptyset} w(i,j) \le (t+1)\sigma(j)-1$ for all $j \in N_\emptyset$;
    \item[iv*.)] If $w(i,j) > 0$ and $i \notin N_\emptyset$ then $i$ dominates $j$;
    \item[v*.)] If $w(i,j) > 0$ and $i \in N_\emptyset$ then $j$ dominates $i$. 
\end{itemize}

    First, for $i \in N_\emptyset$ we know that every voter in $w(i)$ dominates $i$ and hence, by \Cref{lemmaE},
    \begin{align*}
        d(i,a)
        &\le \sum_{j \in w(i)} w(i,j)
        \bigl(3d(i,b) + 2d(j,b)\bigr) \\
        &= 3d(i,b) + \sum_{j \in w(i)} w(i,j)2d(j,b).
    \end{align*}

    Second, for $i \notin N_\emptyset \cup N_a$ we know that $i$ dominates every voter in $w(i)$ and hence, by \Cref{lemmaC},
    \begin{align*}
        d(i,a)
        &\le \sum_{j \in w(i)} w(i,j)
        \bigl(d(i,b) + 2d(j,b)\bigr) \\
        &= d(i,b) + \sum_{j \in w(i)} w(i,j)2d(j,b).
    \end{align*}

        Piecing these two inequalities together we get 
    \begin{align*}
        \soc(a)
        &= \sum_{i \in N} d(i,a) \\
        &= \sum_{i \in N_a} d(i,a)
          + \sum_{i \in N_\emptyset} d(i,a) + \sum_{i \notin N_\emptyset \cup N_a} d(i,a) \\
        &\le \sum_{i \in N_a} d(i,b) \\
        &\quad + \sum_{i \in N_\emptyset}
          \left(3d(i,b) + \sum_{j \in w(i)} w(i,j)2d(j,b)\right) \\
        &\quad + \sum_{i \notin N_\emptyset \cup N_a} \left( d(i,b)
          + \sum_{j \in w(i)} w(i,j)2d(j,b) \right) \\
        &\le \sum_{i \in N_a \setminus S_r} (d(i,b) + (t+1)\cdot 2d(i,b)) \\
         &\quad + \sum_{i \in N_a \cap S_r} (d(i,b) + (t+1)\frac{n}{r}\cdot 2d(i,b)) \\
        &\quad + \sum_{i \in N_\emptyset \setminus S_r}(3d(i,b) + t \cdot 2d(i,b)) \\
        &\quad + \sum_{i \in N_\emptyset \cap S_r}(3d(i,b) +  ((t+1)\frac{n}{r}-1) \cdot 2d(i,b)) \\
        &\quad + \sum_{i \notin N_\emptyset \cup N_a \cup S_r} (d(i,b)
          + (t+1)\cdot 2d(i,b)) \\
        &\quad + \sum_{i \in (N_b \cup N_{ab}) \cap S_r} (d(i,b)
          + (t+1)\frac{n}{r}\cdot 2d(i,b)) \\
        &= \sum_{i \in  S_r} (d(i,b) + (t+1)\frac{n}{r}2d(i,b))  \\
        &\quad + \sum_{i \notin  S_r} (d(i,b)
          + (t+1)\cdot 2d(i,b)). 
    \end{align*}

Now observe that if we uniformly sample polls of size $r$ every voter gets selected with probability $\frac{r}{n}$. Therefore, for every voter $i \in N$ and a random $S_r$ it holds  $P[i \in S_r] = \frac{r}{n}$.
For this we assume that we sample every possible subset of size $r$ with equal likelihood then for each of them execute \Cref{protoGeneral} select any approval winner and denote the probability distribution over the selected candidates by $\pi$.
Next, we bound the social cost $\soc(\pi,d) =  \sum_{c \in C} \pi(c) \sum_{i \in N} d(i,c)$ for any consistent metric $d$ against any candidate $b$ as follows

\begin{align*}
    \soc(\pi, d) &= \sum_{c \in C} \pi(c) \sum_{i \in N} d(i,c) \\
            &= \sum_{\substack{S_r \in \binom{N}{r} \\ a(S_r) \text{ the selected} \\\text{winner in $S_r$}}} \frac{1}{\binom{n}{r}} \sum_{i \in N} d(i,a(S_r)) \\
            &\leq  \sum_{\substack{S_r \in \binom{N}{r}}} \frac{1}{ {\binom{n}{r}}} (\sum_{i \in S_r} (1 + 2(t+1)\frac{n}{r})d(i,b) 
            \\
            &\quad + \sum_{i \in N \setminus S_r} (1 + 2(t+1))d(i,b)) \\
            &= \sum_{i \in N} (\frac{r}{n} (1 + 2(t+1)\frac{n}{r})d(i,b) 
            \\
            &\quad + (1-\frac{r}{n})(1 + 2(t+1))d(i,b)) \\
            &=  \sum_{i \in  N}(1 + 2(2-\frac{r}{n})(t+1))d(i,b).
\end{align*}

Thus, we get a metric distortion of $1 + 2(t+1)(2-\frac{r}{n})$.

It remains to show that such an allocation always exists.
For this we use the same proof technique as above.

We claim that we can find a fractional assignment that assigns:
    \begin{itemize}
        \item voters in $N_{ab}$ to one of their at least $(1-k)r$ voters they dominate;
        \item voters in $N_b$ to voters in $N_a$;
        \item voters in $N_\emptyset$ to voters in $N_a \cup N_{ab}$.
    \end{itemize}

The proof works similarly to before. For completeness we provide the proof. 

Let $N' \subseteq N_{ab} \cup N_b \cup N_\emptyset$ be an arbitrary non-empty subset of voters. To apply Hall's Theorem it remains to show that this subset collectively has $\lvert N'\rvert$ weighted voters they could be assigned to. We will do a case distinction based on the different sets the voters could belong to. Remember that since $\sigma(j) \geq 1$, every capacity is at least as large as in the deterministic setting. We therefore only require $\sigma(j) \geq 1$ except for the case that $N' \subseteq N_b \cup N_{ab}$ with there existing at  least one voter in $N_{ab} \cap  N'$.

First, if $N' \subseteq N_b$, we know that the voters in $N'$ could be fractionally assigned to $t+1 \ge 1$ copies of $N_a$. Since $\lvert N_a\rvert \ge \lvert N_b\rvert$ the Hall condition follows.

Second, if $N' \subseteq N_b \cup N_\emptyset$ with at least one voter in $N'$ belonging to $N_\emptyset$, we know that the voters can be assigned to the $(t+1)$ copies of $N_a \cup N_{ab}$. Since $\lvert N_a \cup N_{ab}\rvert \ge kn$ we get that these are at least
    \begin{align*}
        (t+1)kn
        &= \left(\max\left(\frac{1-2k}{k},\frac{k}{1-k}\right)+1\right)kn \\
        &\ge\left(\frac{1-2k}{k}+1\right)kn \\
        &= (1-k)n \\
        &\ge \lvert N_b \cup N_\emptyset\rvert \\
        &\ge \lvert N'\rvert
    \end{align*}
    many copies.

    Next, we analyze the only case in which we have to argue differently.
    If $N' \subseteq N_b \cup N_{ab}$ with at least one voter in $N'$ belonging to $N_{ab}$, we know that the voter in $N_{ab}$ already dominates at least $(1-k)r$ many voters. Of these dominated voters, the voters in $N_\emptyset$ contribute $(t+1)\frac{n}{r}-1$ while the voters outside $N_\emptyset$ contribute $(t+1)\frac{n}{r}$. Hence, we get that these are at least \begin{align*}
        &((t+1)\frac{n}{r}-1) \lvert N_\emptyset\rvert
          + (t+1)\frac{n}{r}\bigl((1-k)r-\lvert N_\emptyset\rvert\bigr) \\
        &\quad = (t+1)(1-k)n-\lvert N_\emptyset\rvert \\
        &\quad = \left(\max\left(\frac{1-2k}{k},\frac{k}{1-k}\right)+1\right)
          (1-k)n-\lvert N_\emptyset\rvert \\
        &\quad \ge n-\lvert N_\emptyset\rvert \\
        &\quad \ge \lvert N_{ab}\rvert+\lvert N_b\rvert \\
        &\quad \ge \lvert N'\rvert .
    \end{align*}
    
   Finally, assume that $N'\subseteq N$ with there existing voters from both $N_{ab}$ and $N_\emptyset$ in $N'$. Hence, $N'$ can be assigned to all of $N_\emptyset \cup N_a\cup N_{ab}$. For this case, we need one property of $t$, namely $\max\left(\frac{1-2k}{k}, \frac{k}{1-k}\right) = t \ge 1-k$. To see this, we observe that $\frac{1-2k}{k} \ge 1-k$ if and only if $k^2 - 3k+1 \ge 0$ and $\frac{k}{1-k} \ge 1-k$ if and only if $k^2 - 3k+1 \le 0$. Thus, the inequality holds in either of the two cases.
   
   Now,  by using that $\lvert N_\emptyset\rvert \le (1-k)n$ and $\lvert N_a\cup N_{ab}\rvert \ge kn$ we get a fractional weight of at least 
   \begin{align*}
       t &\lvert N_\emptyset\rvert + (t+1)(\lvert N_a\cup N_{ab}\rvert) \ge \lvert N_a\cup N_{ab}\rvert + t\lvert N_\emptyset \cup  N_a\cup N_{ab} \rvert \\ &= \lvert N_a\cup N_{ab}\rvert + \max\left(\frac{1-2k}{k}, \frac{k}{1-k }\right)\lvert N_\emptyset \cup  N_a\cup N_{ab} \rvert \\
       & \ge \lvert N_a\cup N_{ab}\rvert + (1-k) (\lvert N_\emptyset\rvert + kn)
       \\ & \ge \lvert N_a\cup N_{ab}\rvert + (1-k)(\lvert N_\emptyset\rvert + \frac{k}{(1-k)}\lvert N_\emptyset\rvert) \\ 
       & \ge \lvert N_a\cup N_{ab}\rvert + (1-k) \left(\frac{1}{(1-k)}\lvert N_\emptyset\rvert\right) \\ 
       &\ge \lvert N_a\cup N_{ab} \cup N_{\emptyset}\rvert \ge \lvert N_b\cup N_{ab} \cup N_{\emptyset}\rvert \ge \lvert N'\rvert.
   \end{align*}
\end{proof}

We note that this bound is not tight. For instance for $r = 1$ the protocol reduces to the famous random dictatorship rule which has an expected metric distortion of less than $3$. In general, we conjecture that the distortion should be upper bounded by the deterministic bound.
\begin{conjecture}
    For any sampling size the expected metric distortion is bounded by $\max (\frac{2-k}{k}, \frac{3-k}{1-k})$.
\end{conjecture}

\section{Conclusion and Open Questions}
In this paper we introduced a simple protocol $k$-Plurality Approval for eliciting approval ballots based on plurality polls. We showed that this rule achieves a metric distortion of $2 + \sqrt{5} \simeq 4.236$ for $k = \frac{3-\sqrt{5}}{2}$ and further used our techniques to derive the metric distortion of Bucklin's rule and in general voting rules in the fallback bargaining family. We furthermore analyzed the robustness of the voting rule to errors in the poll and its distortion performance as a randomized voting rule.
As shown in \Cref{thm:query_complexity} our protocol only requires $\mathcal{O}(m)$ pairwise comparison queries per voter to find the winner of the protocol. In fact, our protocol can even be implemented as a two-round querying protocol in which, within each round, the queries posed to a voter do not depend on the queries posed to any other voter.
An interesting open question already raised by \citet{EHM24a} is whether one can achieve a constant metric distortion bound using a single round of $\mathcal{O}(m)$ pairwise comparison queries per voter, with the best currently known distortion bound in this setting being $\mathcal{O}(\log(m))$. Another interesting question raised by our work is regarding the performance of dynamics.
For instance, assume that instead of directly voting using our protocol, we elicit a second poll using it. How should voters vote given this additional information? Would perhaps a process akin to \citet{Lasl09a}'s leader rule eventually converge to alternatives with low distortion if we start off with our protocol? This also ties in with one limitation of our protocol: it is not strategyproof. In both the plurality elicitation phase and the approval phase it can be beneficial for voters to deviate from the protocol. Is it possible to quantify the impact of these deviations?

\section*{Acknowledgments}
Fabian Frank is supported by the Deutsche Forschungsgemeinschaft under grant BR 2312/14-1.

\end{document}